\def\argmin{\mathop{\arg\,\min}\limits}%
\begin{document}  
\newtheorem{claim}{Claim}
\newtheorem{theorem}{Theorem}
\newtheorem{lemma}{Lemma}
\newtheorem{corollary}{Corollary}

\title{Approximation Algorithms for 
Stochastic Boolean Function Evaluation and Stochastic Submodular Set Cover}
\author{Amol Deshpande \\  University of Maryland \\ amol@cs.umd.edu \and Lisa Hellerstein \\ Polytechnic Institute of NYU \\hstein@poly.edu \and Devorah Kletenik \\Polytechnic Institute of NYU \\ dkletenik@cs.poly.edu}

\maketitle
\section{Introduction}
We present approximation algorithms for two problems: Stochastic Boolean Function Evaluation and Stochastic Submodular Set
Cover.  We also consider a related ranking problem.
% Our main results are a 3-approximation 
% algorithm for linear threshold formulas and 
% We obtain this result by developing a new approximation algorithm for Stochastic Submodular Set Cover.

Stochastic Boolean Function Evaluation (SBFE) is the problem of determining
the value of a given Boolean function $f$ on an unknown input $x$,
when each bit $x_i$ of $x$ can only be determined by paying a
cost $c_i$.  The assumption is that $x$ is drawn from a
given product distribution, and the goal is to minimize expected
cost.  
SBFE  problems arise in diverse application areas. 
For example, in medical diagnosis, 
the $x_i$ might
correspond to medical tests, 
and $f(x) = 1$ if the patient has a particular disease.
In database query optimization, $f$ could correspond to a 
Boolean query on predicates corresponding to $x_1, \ldots, x_n$,
that has to be evaluated for every tuple in the database to find tuples that satisfy the query~\cite{journals/tods/IbarakiK84,KrishnamurthyBZ86,conf/icde/DeshpandeH08,SrivastavaMWM06}.
In Operations Research, the SBFE problem is known as
``sequential testing'' of Boolean functions.
In learning theory, the SBFE problem
has been studied in the context of
learning with attribute costs.

We focus on developing approximation algorithms
for SBFE problems.
There have been previous papers on exact
algorithms for these problems,
but there
is very little work on approximation algorithms~\cite{unluyurtReview,kaplanMansour-Stoc05}.
Our approach is to reduce the SBFE problems
to Stochastic Submodular Set Cover (SSSC). 
The SSSC problem was introduced by Golovin and Krause, who gave an approximation algorithm
for it called {\em Adaptive Greedy}.
\footnote{Golovin and Krause called the problem Stochastic Submodular Coverage, not Stochastic Submdodular Set Cover, because the cover is not formed using sets. Our choice of name is
for consistency with terminology of Fujito~\cite{fujitoSurvey}.}
Adaptive Greedy is a generalization
of the greedy algorithm for the classical Set Cover problem. We present a new algorithm for
the SSSC problem,
which we call Adaptive Dual Greedy. It is
an extension of the Dual Greedy algorithm for Submodular Set
Cover due to Fujito, which is a generalization of Hochbaum's primal-dual algorithm
for the classical Set Cover 
Problem~\cite{fujitoOriginal,fujitoSurvey}.  We also give a new bound
on the approximation achieved by the Adaptive Greedy algorithm of 
Golovin and Krause.

% Once we reduce an SBFE problem for a given class of Boolean functions
% to an SSSC problem, we can solve the resulting
% SSSC problem using Adaptive Greedy or Adaptive Dual Greedy. 
% The reduction from an SBFE problem
% to an SSSC  problem involves the construction of a utility function.
% 
% Properties of the utility function determine whether 
% or not the resulting algorithm can yield a good approximation bound for the
% original SBFE problem.
% For some Boolean function classes, 
% it is easy to construct utility functions yielding good approximation bounds, 
% but for others, it may be difficult or impossible. 

% There is extensive prior work on learning different classes of Boolean functions, 
% and on property testing for classes of Boolean functions.  Much less is known about 
% the complexity of the evaluation problem for these classes.  
% Similarly, there has been extensive work on developing approximation algorithms for stochastic
% versions of standard combinatorial optimization problems such as the knapsack problem, but
% not on SBFE problems.
% This paper is the first to focus on approximation algorithms for SBFE problems. 
% stochastic/adaptive submodularity can be an effective tool for developing approximation
% algorithms for SBFE problems.

The following is a summary of our results.  We note that
our work also suggests many open questions, including
approximation algorithms for other classes of Boolean functions, 
proving hardness results, and determining adaptivity gaps. 
% We note that while we apply our new SSSC results
% solely to Boolean function evaluation problems, the results 
% may also be applicable to other stochastic problems with product
% distribution (or more narrowly, average case
% or uniform distribution) assumptions.  

% Golovin and Krause applied their Adaptive Greedy algorithm to obtain
% new results on a number of problems arising in machine learning.
% We expect that our results, and theirs, will be useful in 
% proving additional results on 

% Hochbaum's algorithm yielded new results for restricted
% versions of Set Cover (such as the factor of 2 bound on Vertex Cover).
% Similarly, Adaptive Dual Greedy can be used to obtain 
% \subsection{Detailed Summary of Results}

% We begin in Section~\ref{} by providing necessary background, definitions, and
% a review of other related work.
% Following this introductory material, we proceed to the presentation
% of our results.

{\bf \noindent The $Q$-value approach:} 
% We begin by showing how to exploit the Adaptive Greedy algorithm
% of Golovin and Krause
% to obtain approximation algorithms
% for SBFE.
We first show how to solve the SBFE problem
using the following basic approach, which we call the $Q$-value
approach.
We reduce the SBFE problem to an SSSC problem,
through the construction of an
{\em assignment feasible} utility function, with goal value $Q$.
Then we
apply the
Adaptive Greedy algorithm of Golovin and Krause to the SSSC problem, 
yielding an approximation factor of 
$(\ln Q + 1)$. 

Using this approach, we easily obtain
an $O(\log kd)$-approximation algorithm for CDNF formulas (or decision trees), 
where $k$ is the number of clauses in
the CNF and $d$ is the number of terms in the DNF.
Previously,
Kaplan et al.\ gave an algorithm also achieving
an $O(\log kd)$ approximation, but only
for {\em monotone} CDNF
formulas, unit costs, and  the
uniform distribution~\cite{kaplanMansour-Stoc05}.\footnote{
Although our result 
solves a more general problem than Kaplan et al., 
they give their $O(\log kd)$ approximation factor in terms of 
expected {\em certificate cost},
which lower bounds the
expected cost of the optimal strategy. See Section~\ref{sec:background}.} 

% TOur result 
% handles a more general problem than that of Kaplan et al., 
% but is weaker in providing an approximation bound with respect to the optimal
% evaluation strategy.}

We also use the $Q$-value approach to develop an 
$O(\log D)$-approximation algorithm for
evaluating linear threshold formulas with integer coefficients.
Here $D$ is the sum of the magnitudes of the coefficients. 
This $O(\log D)$ bound is a weak bound that we improve below, but we adapt the algorithm later to obtain other results.

The $Q$-value approach has inherent limitations. 
We prove that it will not give an algorithm 
with a sublinear approximation factor for
evaluating read-once DNF  (even though there is a poly-time
exact algorithm~\cite{kaplanMansour-Stoc05,Greiner02}), or for evaluating
linear threshold formulas with exponentially large coefficients.
In fact, our
weak $O(\log D)$ approximation factor for linear threshold formulas
cannot be improved to be sublinear in $n$ with the $Q$-value approach.
We prove our negative results by
introducing a new combinatorial measure of a Boolean function, 
which we call its $Q$-value.

\smallskip
{\bf \noindent Adaptive Dual Greedy:} 
We present 
Adaptive Dual Greedy (ADG), a new algorithm for the SSSC problem.
We prove that it achieves an approximation factor of
of $\alpha$, where $\alpha$ is a ratio that 
depends on the cover constructed by the algorithm.
% The algorithm is an extension of Fujito's Dual Greedy algorithm, 
% which is itself an
% extension of Hochbaum's algorithm for the classic set cover problem
% \cite{fujitoOriginal}.

\smallskip
{\bf \noindent 3-Approximation for Linear Threshold Formulas:}
We substitute ADG for Adaptive Greedy in our $O(\log D)$ algorithm for evaluating
linear threshold formulas.
% Our $O(\log D)$ approximation
We show that in this case, $\alpha$ is
bounded above by 3, and we get a 3-approximation algorithm.

\smallskip
{\bf \noindent New bound on Adaptive Greedy:}
We prove that Adaptive Greedy
achieves an $2(\ln P + 1)$-approximation
in the binary case (and $k(\ln P + 1)$ in the $k$-ary case)
where $P$ is the maximum utility that can
be contributed by a single item.
The proof of this bound uses the same LP that we 
use in our analysis of ADG to lower bound
the approximation factor, combined with Wolsey's approach to bounding
the analogous algorithm for
(non-adaptive) submodular set cover~\cite{wolsey}.
Our bound generalizes Wolsey's
bound for (non-adaptive) submodular set cover~\cite{wolsey},
except for an additional factor of 2.  Wolsey's bound
generalized the $(\ln s + 1)$ bound for 
standard set cover, where
$s$ is the maximum size of one of the input subsets (cf.~\cite{fujitoSurvey}).

\smallskip
{\bf \noindent Simultaneous Evaluation of Linear Threshold Formulas:}
We apply the above techniques
to the problem of simultaneous evaluation of
$m$ linear threshold formulas, giving 
two algorithms
with approximation factors of $O(\log mD_{avg})$ and $D_{max}$
respectively.
Here $D_{avg}$ 
and $D_{max}$ are the average and maximum,
over the $m$ formulas, of the sum of the magnitude of the coefficients. 
These results generalize results of Liu et al.\ for {\em shared filter ordering}~\cite{liuetal}. We also improve one of Liu's results for that problem.
% For OR formulas,
% our $D_{max}$ bound improves a bound of Liu et al.\ by a factor of 2
% If each variable $x_i$ occurs in at most $r$ of the $m$ OR formulas,
% then applying our new bound on Adaptive Greedy to the analysis of
% our simultaneous threshold evaluation algorithm results in an
% approximation factor of $2\ln(\beta_{max} r)+1$. [IS THERE AN ANALOGOUS BOUND
% in LIU ET AL]

\smallskip
{\bf \noindent Ranking of Linear Functions:}
We give an $O(\log (mD_{max}))$-approximation
algorithm for ranking a set of 
$m$ {\em linear} functions $a_1 x_1 + \ldots + a_n x_n$
(not linear {\em threshold} functions), defined over $\{0,1\}^n$, 
by their output values, in our stochastic setting.
This problem arises in Web search and database query processing.
For example, we might
need to rank a set of documents or tuples by their ``scores'', where
the linear functions compute the scores
over a set of unknown properties such as user preferences or data
source reputations.
% , and we need to rank the documents or tuples by score.
% and the goal is to determine an ordering of
% the documents or tuples using minimum expected cost. 
% The problem also arises in 
% the context of 
% In query processing over probabilistic databases, 
% the problem arises when tuples are annotated with 
% probabilities and there is a cost to resolve the uncertainty in a tuple~\cite{conf/sigmod/KanagalLD11}.

% where $D_{max}$ is the maximum,
% over all the functions, of the sum of the magnitudes of the coefficients in that function.

% Although we focus on problems with variables $x_i \in \{0,1\}$,
% we comment on some extensions to problems with variables
% $x_i \in \{0, \ldots, k-1\}$, for $k > 2$.

\section{Stochastic Boolean Function Evaluation and Related Work}
\label{sec:background}

Formally, the input to the SBFE problem
is a representation of a Boolean function $f(x_1, \ldots, x_n)$ from a fixed class of representations $C$,
a probability vector $p = (p_1, \ldots, p_n)$, where $0 < p_i < 1$, 
and a real-valued cost vector $(c_1, \ldots, c_n)$, where $c_i \geq 0$.
An algorithm for this problem must
compute and output the value of $f$ on an $x \in \{0,1\}^n$, drawn
randomly from product distribution $D_p$, such that $p_i = Prob[x_i = 1]$. 
However, it is not given access to $x$.  Instead, it can
discover the value of any $x_i$ by ``testing'' it, at a cost of $c_i$.  
The algorithm must perform the tests sequentially, 
each time choosing the next test to perform. 
The algorithm can be adaptive, so
the choice of the next test can depend on the
outcomes of the previous tests.
The expected cost of the algorithm is the cost it incurs on a random $x$ from $D_p$.
(Since each $p_i$ is strictly between 0 and 1, the algorithm must
continue doing tests until it has obtained a 0-certificate or 1-certificate
for the function.)
The algorithm is optimal if it
has minimum expected cost with respect to $D_p$.
The running time of the algorithm is the (worst-case) time it takes to
determine the next variable to be tested, or to compute the value of $f(x)$ after
the last test. 
The algorithm corresponds to a Boolean decision tree (strategy) computing $f$.

If  $f$ is given by its truth table, the
SBFE Problem can be exactly solved in time polynomial in the
size of the truth table, using dynamic programming, as in~\cite{guijarroRaghavantruthtable,nijssenFromontoptimal}.  
The following algorithm solves
the SBFE problem with an
approximation factor of $n$ for any function $f$,
even under arbitrary distributions:
Test the variables in increasing cost order (cf.~\cite{kaplanMansour-Stoc05}).
We thus consider a factor of $n$ approximation to be trivial.

We now review related work.  
There is a well-known algorithm that exactly solves the
SBFE problem for disjunctions:
test the $x_i$ in increasing order
of ratio $c_i/p_i$ (see, e.g., \cite{GAREY73}).
A symmetric algorithm works for conjunctions.
There is also a poly-time exact algorithm for
evaluating a $k$-of-$n$ function
(i.e., a function that evaluates to 1 iff at least $k$ of the 
$x_i$ are equal to 1)
~\cite{Salloum79,BenDov81,Salloum84, Chang}. 
There is a poly-time exact algorithm 
for evaluating a read-once DNF formula $f$, but the complexity of the problem is
open when $f$ is a general read-once formula~\cite{borosSequential, Greiner02,Greiner06}.
The SBFE problem is NP-hard for linear threshold functions~\cite{heuristicLeastCostCox},
but for the special case
of unit costs and uniform distribution, 
testing the variables in decreasing order of the magnitude of 
their coefficients is optimal~\cite{unluyurtBorosDoubleRegular,fiatPechyony}.
A survey by \"{U}nl\"{u}yurt \cite{unluyurtReview} covers other results on
exactly solving the SBFE problem.

There is a {\em sample} version of the evaluation
problem, where the input is a sample of size $m$ of
$f$ (i.e, a set of $m$ pairs $(x,f(x))$),
and the problem is to build a decision tree that computes $f$ correctly
on the $x$ in the sample that minimizes the average
cost of evaluation over the sample.
Golovin et al.\ and Bellala et al.\ developed $O(\log m)$ approximation algorithms for arbitrary $f$~\cite{golovinKrauseRayNIPS,
bellalaScottIT2012}, and
there is a 4-approximation algorithm 
when $f$ is a conjunction~\cite{barnoyMinSum,feigeMinSum,MBMW05}.
Moshkov and Chikalov proved a related bound
% This is a special case of the general problem
% of Kaplan et al., in which the distribution is uniform over
% the sample.  
% expressing the average cost of the optimal strategy 
in terms of a combinatorial measure of the sample~\cite{MoshkovChikalov}.
Moshkov gave an 
% greedy algorithm that achieves an
$O(\log m)$-algorithm for a worst-case cost variant of this problem~\cite{moshkov}.

A number of non-adaptive
versions of standard and submodular set cover
have been studied.
For example,
Iwata and Nagano~\cite{DBLP:conf/focs/IwataN09} studied
the ``submodular cost set cover'' problem,
where  the cost of the cover
is a submodular function that depends on which subsets are in the cover.
Beraldi and Ruszczynski
addressed a set cover problem
where the set of elements covered by each input subset is
a random variable, and full coverage must be achieved
with a certain probability~\cite{probabilisticSetCover}.

Kaplan et al.\ gave their $O(\log kd)$
approximation factor for monotone CDNF 
(and unit costs, uniform distribution) 
in terms of the expected certificate cost,
rather than the expected cost of the optimal strategy.
The gap between expected certificate cost
and expected cost of the optimal strategy can be large: e.g., for
disjunction evaluation, with unit costs, where Prob$[x_i = 1]$ is $1/(i+1)$,
the first measure is constant, while the second is $\Omega(\log n)$.

Kaplan et al.\ also considered the problem of minimizing the
expected cost of evaluating a Boolean function $f$ 
with respect to a given {\em arbitrary} 
probability distribution,
where the distribution
is given by a conditional probability oracle~\cite{kaplanMansour-Stoc05}.
In the work of Kaplan et al., the goal of evaluation 
differs slightly from ours in that they require the evaluation strategy
to output an ``explanation'' of the function 
value upon termination.
They give as an example the case of evaluating a DNF
that is identically true; 
they require testing of the variables in one term of the DNF in order to output that term as a certificate.  In contrast, under our definitions,
the optimal strategy for evaluating an identically true DNF formula
is a zero-cost one that simply outputs ``true'' and performs no tests.

Charikar et al.~\cite{DBLP:journals/jcss/CharikarFGKRS02} considered
% different version of the evaluation problem, which is not stochastic.
% Their goal is to minimize the worst-case
the problem of minimizing the worst-case
ratio between the cost of evaluating $f$ on an input $x$,
and the minimum cost of a certificate contained in $x$.
There are also papers on building {\em identification}
trees of minimum average cost, given $S \subseteq \{0,1\}^n$,
but that problem is fundamentally different than function evaluation
because each $x \in S$ must have its own leaf (cf. \cite{adlerHeeringa}).

We note that there is a connection between the linear threshold evaluation
problem and the {\em Min-Knapsack} problem.
In Min-Knapsack, you are given a set of
items with values $a_i \geq 0$ and
weights $c_i \geq 0$, and the goal is to select a subset of the items
to put in the knapsack
such that the total value of the items is at least $\theta$,
and the total weight is minimized.
We can therefore
solve Min-Knapsack by simulating the above 
algorithm on the linear threshold formula $\sum_{i=1}^n a_i x_i \geq \theta$,
giving the value 1 as the result of each test.
It is easy to modify the above analysis to show that in this
case the ratio $\alpha$ is at most 2, because $C_0$ is empty.
We thus have a combinatorial 2-approximation algorithm for 
Min-Knapsack, based
on ADG. (In fact, the deterministic Dual Greedy algorithm
of Fujito would be sufficient here, since the outcomes of the tests
are predetermined.)
There are several previous combinatorial and non-combintorial 2-approximation algorithms
for Min-Knapsack, and the problem also has a PTAS
(~\cite{carretalMinKnapsack}, cf.~\cite{hanMakino}).

Han and Makino considered an on-line version of the Min-Knapsack
where the items are given one-by-one over time~\cite{hanMakino}.
There is also previous work on the
``stochastic knapsack'' problem, but that work
concerns the standard (max) knapsack problem, not Min-Knapsack.

\section{Preliminaries}

{\bf \noindent Basic notation and definitions.}
A table of notation used in this paper is provided in Appendix \ref{append:tablenotation}.

A partial assignment is a vector $b \in \{0,1,*\}^n$.
We view $b$ as an assignment to variables $x_1, \ldots, x_n$. For partial assignment $b$, we use $dom(b)$ to denote the 
set $\{x_i | b_i \neq *\}$.
We will use $b \in \{0,1\}^n$
to represent the outcomes of binary tests,
where for $l \in \{0,1\}$,
$b_i=l$
indicates that test $i$ was performed and had outcome $l$,
and $b_i=*$ indicates that test $i$ was not performed.

For partial assignments $a,b \in \{0,1,*\}^n$, 
$a$ is an {\em extension} of $b$, written $a \sim b$,
if $a_i = b_i$ for all $b_i \neq *$.
We also say that $b$ is {\em contained} in $a$.
Given Boolean function $f:\{0,1\}^n \rightarrow \{0,1\}$,
a partial assignment $b \in \{0,1,*\}^n$
is a 0-certificate (1-certificate) of $f$ if
 $f(a) = 0$ ($f(a)=1$) for all $a$ such that $a \sim b$.
Given a cost vector $c = (c_1, \ldots, c_n)$,
the cost of a certificate $b$ is $\sum_{j:b_j \neq *} c_j$.

Let $N = \{1, \ldots, n\}$.
In what follows, we assume that utility functions are integer-valued.
In the context of standard work on submodularity,
a {\em utility function} is a function
$g:2^N \rightarrow \mathbb{Z}_{\geq 0}$.
Given $S \subseteq N$ and $j \in N$, $g_S(j)$ denotes
the quantity $g(S \bigcup \{j\}) - g(S)$.

We will also use the term {\em utility function}
to refer to a function $g:\{0,1,*\}^n \rightarrow \mathbb{ Z}_{\geq 0}$ 
defined on partial assignments.
Let
$g:\{0,1,*\}^n \rightarrow \mathbb{ Z}_{\geq 0}$,
be such a utility function, and let
$b \in \{0,1,*\}^n$.
For $S \subseteq N$, let $b^S \in \{0,1,*\}^n$
where $b^S_i = b_i$ for $i \in S$,
and $b^S_i = *$ otherwise.
We define 
$g(S,b) = g(b^S)$.  For
$j \in N$, we define
$g_{S,b}(j) = g(S \bigcup \{j\},b) - g(S,b)$.

For $l \in \{0,1,*\}$.
the quantity $b_{x_i \leftarrow l}$ denotes the partial assignment
that is identical to $b$ except that $b_i = l$. 
We define $g_b(i,l) = g(b_{x_i \leftarrow l}) - g(b)$ if
$b_i = *$, and $g_b(i,l) = 0$ otherwise.
When $b$ represents test outcomes, and test $i$ has not
been performed yet,
$g_b(i,l)$ is the change in utility
that would result from adding test $i$ with outcome $l$.

Given probability vector $p = (p_1, \ldots, p_n)$, 
we use $x \sim D_p$ to denote a random $x$ drawn from
product distribution $D_p$.
For fixed $D_p$, 
$b \in \{0,1,*\}^n$, and $i \in N$,
we use
$E[g_b(i)]$ to denote 
the expected increase in utility that would be
obtained by testing $i$.
In the binary case,
$E[g_b(i)] = p_i g_b(i,1) + (1-p_i)g_b(i,0)$. 
Note that $E[g_b(i)]=0$ if $b_i \neq *$.
For $b \in \{0,1,*\}^n$, 
$p(b) = \prod_{j \in dom(b)} p(b,j)$ where
$p(b,j) = p_j$ if $b_j = 1$, and $p(b,j) = 1-p_j$ otherwise.

Utility  function $g:\{0,1\}^n \rightarrow \mathbb{Z}_{\geq 0}$ is 
{\em monotone} if for $b \in \{0,1,*\}^n$,
$i \in N$ such that $b_i = *$,
and $l \in \{0,1\}$, $g(b_{x_i \leftarrow l}) - g(b) \geq 0$;
in other words, additional information can only increase utility.
Utility function $g$ is {\em submodular}
if for all $b,b' \in \{0,1,*\}^n$
and $l \in \{0,1\}$, 
$g(b_{x_i \leftarrow l}) - g(b)  \geq g(b'_{x_i \leftarrow l}) - g(b') $
whenever $b' \sim b$ and $b_i = b'_i = *$.
In the testing context, if the $n$ test outcomes are predetermined,
submodularity means that the value of a given test (measured by
the increase in utility) will not increase if we delay that test
until later.

\bigskip
{\bf \noindent The Stochastic Submodular Set Cover (SSSC) problem.}
The SSSC problem is similar to
the SBFE problem, except that
the goal is to achieve a cover.  
Let ${\cal O} = \{0,1,\ldots, k-1\}$
be a finite set of $k$ states,
where $k \geq 2$ and $* \not\in {\cal O}$.
In what follows,
we assume the binary case where $k = 2$, although
we will briefly mention extensions to the
$k$-ary case where $k > 2$.
\footnote{To simplify the exposition, we define the SSSC Problem in terms of integer valued utility 
functions.}

In the (binary) SSSC problem,
the input consists of the set $N$, a
cost vector $(c_1, \ldots, c_n)$, where each $c_j \geq 0$, 
a probability vector $p = (p_1, \ldots, p_n)$
where $p \in [0,1]^n$,
an integer $Q \geq 0$,
and a utility
function $g:({\cal O} \bigcup \{*\})^n \rightarrow \mathbb{ Z}_{\geq 0}$.  
% In the probability vector, each $p_j$ is equal to a vector $v \in [0,1]^{k-1}$
% specifying a multinomial distribution for variable $x_j$,
% such that $Prob[x_j = o_m] = v_m$ for $m \in \{1, \ldots, k-1\}$.
% and $Prob[x_k = o_k] = 
Further, 
$g(x) = 0$ if $x$ is the vector that is all $*$'s, and
$g(x) = Q$ if $x \in {\cal O}^n$.
We call $Q$ the {\em goal utility}.
We say that $b \in ({\cal O} \bigcup \{*\})^n$
is a {\em cover} 
if $g(b) = Q$.  The {\em cost} of cover $b$ is $\sum_{j:b_j \neq *} c_j$.

Each {\em item} $j \in N$ has a state $x_j \in {\cal O}$.
We sequentially choose items from $N$.
When we choose item $j$,
we observe its state $x_j$ (we ``test'' $j$).
The states of items chosen so far are represented by a partial
assignment $b \in ({\cal O} \bigcup *)^n$.
When $g(b) = Q$, we have a cover, and we can output it.
The goal is to determine the order in which to choose the items, 
while minimizing the expected testing cost with respect to distribution $D_p$.
We assume that an algorithm for this problem will be executed in 
an on-line setting, and that it can be adaptive.

SSSC is a generalization of Submodular Set Cover (SSC),
which is a generalization of the standard 
(weighted) Set Cover problem, which we call Classical Set Cover.
% All three problems take as input a cost vector $c = (c_1, \ldots, c_n)$ as input,
% where each $c_i \geq 0$.  
In Classical Set Cover, the input is a finite ground set $X$, a set $F = \{ S_1, \ldots, S_m\}$ where each $S_j \subseteq X$, and
a cost vector $c = (c_1, \ldots, c_m)$ where each $c_j \geq 0$.
The problem is to find a min-cost ``cover'' $F' \subseteq F$
such that $\bigcup_{S_j \in F'} S_j = X$, and the
cost of $F'$ is $\sum_{S_j \in F'} c_j$.
In SSC, the input is a cost vector $c = (c_1, \ldots, c_n)$,
where each $c_j \geq 0$,
and a utility function $g:2^N \rightarrow {\cal Z}_{\geq 0}$ 
such that $g$ is monotone and submodular, 
$g(\emptyset) = 0$,
and $g(N)=Q$.  The goal is 
to find a subset $S \subseteq N$ such that $g(S) = Q$ and 
$\sum_{j \in S} c_j$ is minimized.
SSC can be viewed as a special case of
SSSC in which each $p_j$ is equal to 1.

\bigskip
{\bf \noindent The Adaptive Greedy algorithm for Stochastic Submodular Set Cover.}
The Classical Set Cover problem has a simple greedy approximation algorithm that
chooses the subset with the ``best bang for the buck" -- i.e.,
the subset covering the most
new elements per unit cost.
The generalization of this algorithm to SSC, due to Wolsey, 
chooses the element 
that adds the maximum additional utility per unit cost~\cite{wolsey}.
The Adaptive Greedy algorithm of Golovin and Krause, for the SSSC problem,
is a further generalization.
It chooses 
the element with the maximum {\em expected} increase in
utility per unit cost.
(Golovin and Kruase actually formulated Adaptive Greedy
for use in solving a somewhat more general problem than SSSC, but here
we describe it only as it applies to SSSC.)
We present the pseudocode for Adaptive Greedy 
in Algorithm \ref{alg:greedy}.

Some of the variables used in the pseudocode are not necessary 
for the running of the algorithm, but are useful in its analysis. (In Step \ref{mainstepa}, assume that if $E[g_b(x)] = 0$, the expression evaluates to 0.)

\begin{algorithm}[ h!]
\caption{Adaptive Greedy}
\label{alg:greedy}

\begin{algorithmic}[h!b]
	\STATE $b \gets (*, * \ldots, *)$ 
    \STATE $l \gets 0$, $F^0 \gets \emptyset$
	\WHILE {$b$ is not a solution to SSSC ($f(b) < Q$)}
        \STATE $l \gets l+1$
        \STATE $j_l \gets \argmin_{j \not\in F^{l-1}} \frac{c_j}{E[g_b(j)]}$ 
        \label{mainstepa}
	\STATE $k \gets$ the state of $j_l$
	\COMMENT ``test'' $j_l$
        \STATE $F^l \gets F^{l-1} \bigcup \{j_l\}$ ~~\COMMENT $F^l = dom(b)$
	\STATE $b_{j_l} \gets k$
	\ENDWHILE
	\STATE return $b$
\end{algorithmic}
\end{algorithm}

Golovin and Krause proved that Adaptive Greedy is a
$(\ln Q + 1)$-approximation algorithm, where $Q$ is the goal utility.
We will make repeated use of this bound. 

\section{Function Evaluation and the SSSC Problem}

\subsection{The $Q$-value approach and CDNF Evaluation.}

{\bf \noindent Definition:} Let $f(x_1,\ldots, x_n)$ be a Boolean function.
Let $g:\{0,1,*\}^n \rightarrow \mathbb{ Z}_{\geq 0}$ be a utility
function.  We say that $g$ is {\em assignment feasible
for $f$, with goal value $Q$},  if (1) $g$ is monotone and submodular,
(2) $g(*, *, \ldots, *)=0$, and (3) for $b \in \{0,1,*\}^n$,
$g(b) = Q$ iff $b$ is a 0-certificate or a 1-certificate of $f$.

We will use the following approach to solving SBFE problems, 
which we call the {\em $Q$-value approach}. 
To evaluate $f$,
% with respect to probability vector $p$ and cost vector $c$,
we construct an assignment feasible  
utility function $g$ for $f$ with goal value $Q$.
We then run Adaptive Greedy on
the resulting SSSC problem.
Because $g(b) = Q$ iff $b$ is either a 0-certificate or a 1-certificate of $f$,
the decision tree that is (implicitly) output by Adaptive Greedy
is a solution to the SBFE problem for $f$.
By the bound on Adaptive Greedy, this solution
is within a factor of $(\ln{Q} + 1)$ of optimal.

The challenge in using the above approach is in constructing $g$.
Not only must $g$ be assignment feasible, but 
$Q$ should be subexponential, to obtain
a good approximation bound.
We will use the following lemma, due to Guillory and Bilmes, in our construction of $g$.  

\begin{lemma}\mbox{\cite{guillorybilmesICML11}}
\label{combinegoals}
Let $g_0:\{0,1,*\}^n \rightarrow \mathbb{Z}_{\geq 0}$,` 
$g_1:\{0,1,*\}^n \rightarrow \mathbb{Z}_{\geq 0}$, 
and $Q_0, Q_1 \in \mathbb{Z}_{\geq 0}$ be such that
$g_0$ and $g_1$ are monotone, submodular utility functions, 
$g_0(*,*, \ldots, *) = g_1(*,*, \ldots, *) = 0$, and
$g_0(a) \leq Q_0$ and
$g_1(a) \leq Q_1$ for all $a \in \{0,1\}^n$.

Let $Q_{\vee} = Q_0Q_1$
and let
$g_{\vee}:\{0,1,*\}^n \rightarrow \mathbb{Z}_{\geq 0}$ be such that
$g_{\vee}(b) = Q_{\vee} - (Q_0-g_0(b))(Q_1-g_1(b))$. 

Let $Q_{\wedge} = Q_0+Q_1$
and let
$g_{\wedge}:\{0,1,*\}^n \rightarrow \mathbb{Z}_{\geq 0}$ be such that
$g_{\wedge}(b) = g_0(b) + g_1(b)$.

Then $g_{\vee}$ and $g_{\wedge}$ are monotone and submodular, 
and $g_{\vee}(*, \ldots, *) = g_{\wedge}(*, \ldots, *) = 0$.
For $b \in \{0,1,*\}^n$,
$g_{\vee}(b) = Q_{\vee}$ iff 
$g_0(b) = Q_0$ or $g_1(b) = Q_1$, or both.
Further,
$g_{\wedge}(b) = Q_{\wedge}$ iff 
$g_0(b) = Q_0$ and $g_1(b) = Q_1$.
\end{lemma}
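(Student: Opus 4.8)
The plan is to dispatch the three easy claims first and then concentrate on the submodularity of $g_\vee$, which is the only genuinely nontrivial part. For the boundary values I would simply substitute the all-$*$ vector: $g_\wedge(*,\ldots,*) = g_0(*,\ldots,*) + g_1(*,\ldots,*) = 0$, and $g_\vee(*,\ldots,*) = Q_0 Q_1 - (Q_0 - 0)(Q_1 - 0) = 0$. For the goal-value characterizations, the first step is to record that $Q_0 - g_0(b) \geq 0$ and $Q_1 - g_1(b) \geq 0$ for \emph{every} partial assignment $b$ (the hypothesis gives $g_j(a) \le Q_j$ on full assignments, and monotonicity extends this to partial ones). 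Then $g_\vee(b) = Q_\vee$ iff the nonnegative product $(Q_0 - g_0(b))(Q_1 - g_1(b))$ vanishes, i.e.\ iff $g_0(b) = Q_0$ or $g_1(b) = Q_1$; and $g_\wedge(b) = Q_0 + Q_1$ forces each of the two bounded summands to its maximum, giving $g_0(b) = Q_0$ and $g_1(b) = Q_1$. The monotonicity and submodularity of $g_\wedge$ are then immediate from linearity of the discrete derivative: writing $\Delta_j = g_j(b_{x_i \leftarrow l}) - g_j(b)$, the marginal of $g_\wedge$ is $\Delta_0 + \Delta_1 \ge 0$, and since submodularity of each $g_j$ gives $\Delta_j \ge \Delta_j'$ at an extension $b' \sim b$, the sum inherits the inequality.

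The real work is $g_\vee$, and the first move I would make is to introduce the shorthand $h_j(b) = Q_j - g_j(b) \ge 0$, so that $g_\vee = Q_0 Q_1 - h_0 h_1$, and then compute the marginal of adding test $i$ with outcome $l$ at a partial assignment $b$ with $b_i = *$. Writing $\bar b = b_{x_i \leftarrow l}$ and $\Delta_j = g_j(\bar b) - g_j(b) \ge 0$, I would expand $h_0(\bar b) h_1(\bar b) = (h_0(b) - \Delta_0)(h_1(b) - \Delta_1)$ and simplify to reach the factored form
\[ g_\vee(\bar b) - g_\vee(b) = \Delta_1 \, h_0(\bar b) + \Delta_0 \, h_1(b). \]
Every factor on the right is nonnegative, so monotonicity of $g_\vee$ drops out at once.

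This factored form is also the key to submodularity, and it is the one spot where I expect a subtlety: the naive expansion carries a cross term $-\Delta_0\Delta_1$ pointing the wrong way, but the factored form absorbs it into $h_0(\bar b)$, so no delicate cancellation argument is needed. Fix an extension $b' \sim b$ with $b'_i = *$, set $\bar{b'} = b'_{x_i\leftarrow l}$ and $\Delta_j' = g_j(\bar{b'}) - g_j(b')$, and note I must show $\Delta_1 h_0(\bar b) + \Delta_0 h_1(b) \ge \Delta_1' h_0(\bar{b'}) + \Delta_0' h_1(b')$. I would compare the two products termwise. For the first term, submodularity of $g_1$ gives $\Delta_1 \ge \Delta_1' \ge 0$, while monotonicity of $g_0$ applied to $\bar{b'} \sim \bar b$ gives $h_0(\bar b) \ge h_0(\bar{b'}) \ge 0$, so $\Delta_1 h_0(\bar b) \ge \Delta_1' h_0(\bar{b'})$; symmetrically, submodularity of $g_0$ gives $\Delta_0 \ge \Delta_0' \ge 0$ and monotonicity of $g_1$ gives $h_1(b) \ge h_1(b') \ge 0$, so $\Delta_0 h_1(b) \ge \Delta_0' h_1(b')$. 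Adding the two inequalities yields submodularity of $g_\vee$ and finishes the proof. The one point I would verify carefully is that $\bar{b'}$ is genuinely an extension of $\bar b$, which is exactly where the hypothesis $b_i = b'_i = *$ is used: it guarantees that setting coordinate $i$ to $l$ in both preserves the extension relation so that monotonicity of $g_0$ applies to the extended assignments.
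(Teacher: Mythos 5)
The paper does not prove this lemma at all --- it is stated with a citation to Guillory and Bilmes and used as a black box --- so there is no internal proof to compare against. Your argument is correct and self-contained. The easy parts (boundary values, the goal-value characterizations via $0 \le Q_j - g_j(b)$, which does indeed follow from the hypothesis on full assignments plus monotonicity, and the linearity argument for $g_\wedge$) are all fine. The heart of the matter is your factored marginal for $g_\vee$: writing $h_j = Q_j - g_j$, the identity
\[
g_\vee(\bar b) - g_\vee(b) \;=\; \Delta_1 h_0(b) + \Delta_0 h_1(b) - \Delta_0\Delta_1 \;=\; \Delta_1\, h_0(\bar b) + \Delta_0\, h_1(b)
\]
checks out, and it correctly absorbs the troublesome cross term $-\Delta_0\Delta_1$ into $h_0(\bar b)\ge 0$. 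The termwise comparison for submodularity is also sound: each of the two products is a product of two nonnegative factors, one dominated via submodularity of $g_j$ ($\Delta_j \ge \Delta_j'$) and the other via monotonicity of the complementary $g_{1-j}$ (using $\bar b' \sim \bar b$ for the first term and $b' \sim b$ for the second), so each product dominates its primed counterpart and the sum does too. You were right to flag the verification that $\bar b' \sim \bar b$; that is exactly where the hypothesis $b_i = b'_i = *$ is needed. This is a complete and correct proof of a statement the paper leaves to the cited reference.
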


Using the $Q$-value approach, it is easy to obtain an algorithm
for evaluating CDNF formulas.
A CDNF formula for Boolean function $f$ is a pair
$(\phi_0,\phi_1)$ where $\phi_0$ and $\phi_1$ are CNF and DNF formulas
for $f$, respectively.

\begin{theorem}
There is a polynomial-time $O(\log kd)$-approximation algorithm solving the SBFE problem for CDNF formulas, 
where $k$ is the number of clauses in the  CNF, and
$d$ is the number of terms in the DNF.
\end{theorem}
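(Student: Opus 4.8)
The plan is to use the $Q$-value approach: reduce to an SSSC instance by constructing an assignment feasible utility function $g$ for $f$ with goal value $Q = kd$, so that the $(\ln Q + 1)$ bound for Adaptive Greedy yields the claimed $O(\log kd)$ factor. I would build $g$ as the OR-combination $g_{\vee}$ (via Lemma \ref{combinegoals}) of a utility function $g_1$ detecting $1$-certificates and a utility function $g_0$ detecting $0$-certificates.

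The crucial design choice, and the main obstacle, is keeping the goal values of $g_0$ and $g_1$ small. The ``natural'' choice of detecting $1$-certificates through the DNF (an OR over terms, each a conjunction) forces the OR-combination of Lemma \ref{combinegoals}, whose goal value is a \emph{product}, giving $Q_1$ as large as $n^d$ and only an $O(d \log n)$ factor. I avoid this by using the \emph{complementary} representation for each certificate type. Since $f$ equals the CNF $\phi_0 = \bigwedge_{j=1}^k C_j$, a partial assignment $b$ is a $1$-certificate iff it already satisfies every clause, i.e.\ sets at least one literal true in each $C_j$. I therefore define, for each clause $C_j$, the detector $g_{C_j}(b) = \min\{1, (\text{number of literals of } C_j \text{ set true by } b)\}$, which is monotone, submodular (a capped coverage function, i.e.\ a concave function of a modular one), equals its goal value $1$ iff $C_j$ is satisfied, and is $0$ on the all-$*$ assignment. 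Summing these via the AND-combination $g_{\wedge}$ of Lemma \ref{combinegoals} gives $g_1 = \sum_{j=1}^k g_{C_j}$ with goal value $Q_1 = k$, and $g_1(b) = k$ iff $b$ is a $1$-certificate. Symmetrically, using the DNF $\phi_1 = \bigvee_{i=1}^d T_i$, $b$ is a $0$-certificate iff it kills every term, so I define $g_{T_i}(b) = \min\{1, (\text{number of literals of } T_i \text{ set false by } b)\}$ and $g_0 = \sum_{i=1}^d g_{T_i}$ with $Q_0 = d$, where $g_0(b) = d$ iff $b$ is a $0$-certificate. The capping at $1$ is what keeps the goal values at $k$ and $d$ rather than scaling with clause/term widths.

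I would then apply the OR-combination of Lemma \ref{combinegoals} to $g_0$ and $g_1$, obtaining $g = g_{\vee}$ with goal value $Q = Q_0 Q_1 = kd$. Since $g_0 \leq d$ and $g_1 \leq k$ always, the hypotheses of the lemma hold, and it gives that $g$ is monotone and submodular, $g(*,\ldots,*) = 0$, and $g(b) = kd$ iff $g_0(b) = d$ or $g_1(b) = k$, i.e.\ iff $b$ is a $0$- or $1$-certificate of $f$; hence $g$ is assignment feasible for $f$ with goal value $kd$. Feeding the resulting SSSC instance to Adaptive Greedy produces a decision tree solving the SBFE problem for $f$, within a factor $\ln(kd) + 1 = O(\log kd)$ of optimal.

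Finally I would check polynomial running time: $g$, $g_0$, $g_1$, and the expected marginal utilities $E[g_b(j)] = p_j g_b(j,1) + (1-p_j) g_b(j,0)$ are all computable in time polynomial in $n$, $k$, and $d$ by direct literal counting, and Adaptive Greedy terminates after at most $n$ tests. The only genuinely delicate points are (i) verifying that the capped coverage detectors satisfy the partial-assignment form of submodularity from the Preliminaries, which follows because an extension $b' \sim b$ can only have already satisfied or killed the clause or term, so the marginal gain of any further test is no larger for $b'$ than for $b$, and (ii) confirming the exactness of the certificate characterizations, which rests on the fact that a clause (term) is forced true (false) by $b$ precisely when $b$ sets one of its literals true (false).
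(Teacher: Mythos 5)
Your proposal is correct and follows essentially the same route as the paper: your $g_1 = \sum_j \min\{1,\cdot\}$ and $g_0 = \sum_i \min\{1,\cdot\}$ are exactly the paper's ``number of clauses of the CNF set to 1 by $b$'' and ``number of terms of the DNF set to 0 by $b$,'' and both proofs then apply the disjunctive construction of Lemma~\ref{combinegoals} to get goal value $Q = kd$ and invoke the $(\ln Q + 1)$ bound for Adaptive Greedy. The only (harmless) difference is that you justify monotonicity and submodularity by decomposing $g_0,g_1$ into capped per-clause/per-term detectors combined via the AND construction, where the paper simply asserts these properties directly.
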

%\vspace{-1.5em}
\begin{proof}
Let $\phi_0$ be the CNF and $\phi_1$ be the DNF.
Let $f$ be the Boolean function defined by these formulas.
Let $k$ and $d$ be, respectively, the number of clauses and terms of $\phi_0$ and $\phi_1$.
Let $g_0:\{0,1,*\}^n \rightarrow \mathbb{Z}_{\geq 0}$ be such that
for $a \in \{0,1,*\}^n$,
$g_0(a)$ is the number of terms of $\phi_1$ set to 0 by $a$ (i.e.
terms with a literal $x_i$ such that $a_i = 0$, or a literal $\neg{x_i}$
such that $a_i = 1$). Similarly, let 
$g_1(a)$ be the number of clauses
of $\phi_0$ set to 1 by $a$.  Clearly, $g_0$ and $g_1$ are monotone
and submodular.
Partial assignment $b$ is a 0-certificate of $f$ 
iff $g_0(b) = d$ and a 1-certificate of $f$ iff $g_1(b) = k$.
Applying the disjunctive
construction of Lemma~\ref{combinegoals} to $g_1$ and $g_0$,
yields a utility function $g$ 
that is assignment feasible for $f$ with goal value $Q = kd$.
Applying Adaptive Greedy and its $(\ln Q + 1)$ bound
yields the theorem.
\end{proof}

Given a decision tree for a Boolean function $f$, a CNF (DNF) for $f$
can be easily computed using the paths to the 0-leaves (1-leaves) 
of the tree.
Thus the above theorem gives an $O(\ln t)$ approximation algorithm
for evaluating decision trees, where $t$ is the number of leaves.

\subsection{Linear threshold evaluation via the $Q$-value approach}
\label{sec:thresholdQ}
A linear threshold formula with integer coefficients has the form
$\sum_{i=1}^n a_i x_i \geq \theta$ where the $a_i$ and $\theta$ are integers.
It represents the function $f:\{0,1\}^n \rightarrow \{0,1\}$
such that $f(x) = 1$ if
$\sum_{i=1}^n a_i x_i \geq \theta$, and $f(x) = 0$ otherwise.
We show how to use the $Q$-value approach to 
obtain an algorithm solving the SBFE problem for linear threshold formulas with
integer coefficients.  The  algorithm achieves an
$O(\log D)$-approximation, 
$D = \sum_{i=1}^n |a_i|$.
This algorithm, like the CDNF algorithm, works by
reducing the evaluation problem to an SSSC problem.
However,
the CDNF algorithm reduces the evaluation problem to a stochastic version
of {\em Classical} Set Cover problem (each $x_i$ covers
one subset of the (term,clause) pairs when it equals 1, and another
when it equals 0).  Here there is no associated
Classical Set Cover problem.

Let $h(x) = (\sum_{i=1}^n a_i x_i) - \theta$.
For $b \in \{0,1,*\}$,
let $min(b) = \min\{h(b'):b' \in \{0,1\}^n$ and $b' \sim b\}$ 
and
let $max(b) = \max\{h(b'):b' \in \{0,1\}^n$ and $b' \sim b\}$. 
Thus $min(b) = (\sum_{j:b_j \neq *} a_j b_j) + (\sum_{i:a_i < 0,b_i=*} a_i) - \theta$, 
$max(b) = (\sum_{j:b_j \neq *} a_j b_j) + (\sum_{i:a_i > 0,b_i=*} a_i) - \theta$, 
and each can
be calculated in linear time.
Let $R_{min} = min(*, \ldots, *)$
and $R_{max} = max(*, \ldots, *)$.
If $R_{min} \geq 0$ or $R_{max} < 0$, $f$ is constant and
no testing is needed.  Suppose this is not the case.

Let $Q_1 = -R_{min}$ and
let submodular utility
function $g_1$ be such that
$g_1(b) = min\{-R_{min},min(b) - R_{min}\}$,
Intuitively,
$Q_1 - g_1(b)$ is the number of different values
of $h$ that can be induced by extensions $b'$ of $b$
such that $f(b') = 0$.
Similarly, define 
$g_0(b) = min\{R_{max}+1,R_{max}-max(b)\}$
and $Q_0 = R_{max}+1$.
Thus $b$ is a 1-certificate of $f$ iff $g_1(b) = Q_1$,
and a 0-certificate iff $g_0(b) = Q_0$. 

We apply the disjunctive construction
of Lemma~\ref{combinegoals} to construct
$g(b) = Q - (Q_1 - g_1(b))(Q_0 - g_0(b))$,
which is an assignment feasible utility function for $f$
with goal value $Q = Q_1 Q_0$.
Finally, we obtain an $O(\log D)$ approximation bound by
applying the $(\ln Q + 1)$ bound on Adaptive Greedy.

The quantity $D$ 
can be exponential in $n$, the number of
variables.   One might hope to obtain a better approximation
factor, still using the $Q$-value
approach, by designing a more clever assignment-feasible utility function
with a much lower goal-value $Q$.
However, in the next section we show
that this is not possible. 
Achieving a 3-approximation for this problem, as we do 
in Section~\ref{sec:adg}, requires a different approach.

\subsection{Limitations of the Q-value approach}
\label{subsec:limitations}
The $Q$-value approach depends on
finding an assignment feasible utility function $g$ for $f$.
We first demonstrate that a generic such $g$ exists for
all Boolean functions $f$.
Let $Q_0 = |\{a \in \{0,1\}^n | f(a) = 0\}|$ and $Q_1 = |\{a \in \{0,1\}^n | f(a) = 1\}$.
For partial assignment $b$, let $g_0(b) = Q_0 - |\{ a \in \{0,1\}^n| a \sim b, f(a) = 0\}|$
with goal value $Q_0$, and let
$g_1(b) = Q_1 - |\{a \in \{0,1\}^n | a \sim b, f(a) = 1\}|$ with goal value $Q_1$.
Then $g_0, Q_0, g_1$ and $Q_1$  obey the properties of 
Lemma~\ref{combinegoals}.  Apply the disjunctive construction in that lemma,
and let $g$ be the resulting utility function.
Then $g$ is assigment feasible for $f$ with goal value $Q = Q_1Q_0$.
In fact, this $g$ is precisely the utility function
that would be constructed by the approximation algorithm of Golovin et al.
for computing a consistent decision tree of min-expected cost with
respect to a sample,
if we take the sample to be the set of
all $2^n$ entries $(x,f(x))$ in the truth table of $f$~\cite{golovinKrauseRayNIPS}.
The goal value $Q$ of this $g$ is
$2^{\theta(n)}$, so in this case the bound
for Adaptive Greedy, $(\ln Q + 1)$, is linear in $n$.

Since we want a sublinear approximation factor, 
we would instead like to construct an 
assignment-feasible utility function for $f$ 
whose $Q$ is sub-exponential in $n$.
However, we now show this is impossible
even for some simple Boolean functions $f$.
We begin by introducing the following
combinatorial measure of a Boolean function,
which we call its $Q$-value.

\vspace{6pt}
{\bf \noindent Definition:}  The $Q$-value of a Boolean function 
$f:\{0,1\}^n \rightarrow \{0,1\}$
is the minimum integer $Q$ such that there exists a 
assignment feasible utility function $g$ for $f$
with goal value $Q$.

The generic $g$ above
shows that the $Q$-value of every $n$-variable 
Boolean function is at most $2^{O(n)}$.

\begin{lemma}
\label{technical}
Let $f(x_1, \ldots, x_n)$ be a Boolean function, where $n$ is even.
Further, let $f$ be such that for all 
$n' \leq n/2$, and for all 
$b \in \{0,1,*\}^n$,
if $b_i = b_{n/2+i}=*$ for all $i \in \{n'+1, \ldots, n/2\}$,
the following properties hold:
(1) if for all $i \in \{1, \ldots, n'\}$,
exactly one of $b_i$ and $b_{n/2+i}$ is equal to * and the other is equal to 1,
then $b$ is not a 0-certificate or a 1-certificate of $f$ and
(2)  if for all $i \in \{1, \ldots, n'-1\}$,
exactly one of $b_i$ and $b_{n/2+i}$ is equal to * and the other is equal to 1,
and $b_{n'} = b_{n/2+n'} = 1$, then
$b$ contains a 1-certificate of $f$. 
Then the $Q$-value of $f$ is at least $2^{n/2}$. 
\end{lemma}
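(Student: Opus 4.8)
The plan is to fix an arbitrary assignment feasible utility function $g$ for $f$ with goal value $Q$ and prove $Q \geq 2^{n/2}$ directly, exploiting only monotonicity, submodularity, and the certificate characterization of $g$. Two preliminary facts are worth isolating. Every total assignment in $\{0,1\}^n$ is a $0$- or $1$-certificate, so $g$ equals $Q$ there; by monotonicity $g(b) \leq Q$ for \emph{every} partial assignment $b$, with equality exactly when $b$ is a certificate. It is then convenient to track the gap $\delta(b) := Q - g(b) \geq 0$, which is positive precisely on non-certificates, and hence (as $g$ is integer valued) at least $1$ there.

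The combinatorial device is a complete binary tree $\mathcal{T}$ of partial assignments of depth $n/2$, with one level per coordinate pair $(i, n/2+i)$. A node at depth $i$ is specified by choosing, for each pair $j \leq i$, which of $x_j, x_{n/2+j}$ is set to $1$ (the partner left $*$), with all later pairs entirely $*$. By hypothesis~(1) (with $n' = i$) every such node is a non-certificate. At a depth-$(i-1)$ node $P$ we branch on pair $i$: the left child $P_L$ additionally sets $x_i = 1$, the right child $P_R$ additionally sets $x_{n/2+i} = 1$, and these are again non-certificates. The assignment $P_{LR}$ obtained by setting \emph{both} $x_i = 1$ and $x_{n/2+i} = 1$ matches the prefix pattern of hypothesis~(2) (with $n' = i$), so it contains a $1$-certificate and therefore $g(P_{LR}) = Q$.

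The key step is a superadditivity recursion for the gaps, obtained from submodularity applied to the completing test. Writing $a = g(P_L) - g(P)$ and $b = g(P_R) - g(P)$, submodularity for the test ``set $x_{n/2+i} = 1$'' at the nested bases $P \subseteq P_L$ (both having that coordinate equal to $*$) gives $b = g(P_R) - g(P) \geq g(P_{LR}) - g(P_L) = \delta(P) - a$, so that $a + b \geq \delta(P)$. Rewriting in terms of gaps, this is exactly
\[ \delta(P) \;\geq\; \delta(P_L) + \delta(P_R). \]
A straightforward induction up $\mathcal{T}$ then yields $\delta(\text{root}) \geq \sum_{\ell} \delta(\ell)$, the sum ranging over the $2^{n/2}$ leaves. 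The root is the all-$*$ assignment, so $\delta(\text{root}) = Q$; each leaf is a full transversal, hence a non-certificate, so $\delta(\ell) \geq 1$. Combining gives $Q \geq 2^{n/2}$.

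The main obstacle---really the only nontrivial point---is establishing the gap recursion $\delta(P) \geq \delta(P_L) + \delta(P_R)$ and confirming that the specific prefix structure demanded by hypotheses~(1) and~(2) is respected at every node: property~(1) must certify all $2^{\,i}$ depth-$i$ nodes as non-certificates, while property~(2) must certify each frontier completion $P_{LR}$ as a certificate, and it is precisely the ``all later pairs are $*$'' clause in those hypotheses that makes the tree well defined. Once the recursion is in place, the doubling to $2^{n/2}$ is immediate from counting leaves.
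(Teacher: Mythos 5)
Your proof is correct, and it runs on the same engine as the paper's: the pairing of $x_i$ with $x_{n/2+i}$, hypothesis~(2) forcing $g(P_{LR}) = Q$ at the ``both ones'' completion, the submodularity inequality $g(P_R) - g(P) \geq g(P_{LR}) - g(P_L)$, and integrality of $g$ at a non-certificate supplied by hypothesis~(1). The difference is in the accounting. The paper inducts along a \emph{single} root-to-leaf path of your tree, maintaining the invariant that at level $l$ some node $d_{r,s}$ satisfies $0 \leq Q - g(d_{r,s}) \leq Q/2^l$; each step is a case analysis (if the child $d_{r1,s*}$ has gap at most $Q/2^{l+1}$, take it; otherwise the same submodularity inequality forces the sibling $d_{r*,s1}$ to have gap at most $Q/2^{l+1}$), and at $l = n/2$ hypothesis~(1) plus integrality give $1 \leq Q/2^{n/2}$. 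You instead rearrange that very inequality into the gap superadditivity $\delta(P) \geq \delta(P_L) + \delta(P_R)$ and sum it over the complete binary tree, getting $Q = \delta(\mathrm{root}) \geq \sum_{\ell}\delta(\ell) \geq 2^{n/2}$ because all $2^{n/2}$ leaves are non-certificates. Your organization buys a cleaner argument: no case analysis, the factor $2^{n/2}$ appears transparently as the number of leaves, and hypothesis~(1) is needed only at the leaves. The paper's organization buys locality: it tracks one partial assignment and one bound rather than an exponentially large tree, so it reads as a compact induction on $l$. One point you rightly make explicit and should keep: monotonicity together with the fact that every total assignment is a certificate gives $g \leq Q$ everywhere, which is what upgrades ``the leaf gap is a nonzero integer'' (from hypothesis~(1) and condition~(3) of assignment feasibility) to ``the leaf gap is at least $1$''; the paper relies on the same fact implicitly.
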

\begin{proof}  
Let $f$ have the properties specified in the lemma.
For bitstrings $r,s \in \{0,1\}^l$, where $0 \leq l \leq n/2$,
let $d_{r,s} \in \{0,1,*\}^n$ be such that $d_i = r_i$ 
and $d_{n/2+i} = s_i$ for $i \in \{1, \ldots, l\}$,
and $d_i = *$ for all other $i$.
Suppose $g$ is an assignment feasible utility function 
for $f$ with goal value $Q$.
We prove the following claim.
Let $0 \leq l  \leq n/2$.   Then there exists
$r,s \in \{0,1,*\}^l$ 
such that $0 \leq Q - g(d_{r,s}) \leq Q/2^l$, and
for all $i \in \{1, \ldots, l\}$, either 
$r_i = 1$ and $s_i = *$, or $r_i = *$ and $s_i = 1$.

We prove the claim by induction on $l$.
It clearly holds for $l=0$.
For the inductive step, assume it holds for $l$.
We show it holds for $l+1$.
Let $r,s \in \{0,1,*\}^l$ be as guaranteed by the
assumption, so
$Q - g(d_{r,s}) \leq n/2^l$.
% Note that $r$ and $s$ can only assign non-$*$ values to
% variables $x_i$ or $x_{n/2+i}$ where $i \in \{1, \ldots, l\}$.

For $\sigma \in \{0,1,*\}$,
$r\sigma$ denotes the concatenation of bitstring $r$ with $\sigma$,
and similarly for $s\sigma$.
By the conditions on $f$ given in the lemma,
$d_{r,s}$ is not a 0 or 1-certificate of $f$.
However, $d_{r1,s1}$ is a 1-certificate of $f$ and
so $g(d_{r1,s1}) = Q$.
If $Q - g(d_{r1,s*}) \leq Q/2^{l+1}$, then
the claim holds for $l+1$, because
$r1,s*$ have the necessary properties.
Suppose $Q - g(d_{r1,s*}) > Q/2^{l+1}$.
Then, because $g(d_{r1,s1}) = Q$,
$g(d_{r1,s1}) - g(d_{r1,s*}) > Q/2^{l+1}$.
Note that $d_{r1,s1}$ is the extension of  $d_{r1,s*}$
produced by setting $d_{n/2+l+1}$ to $1$.
Similarly, $d_{r*,s1}$ is the extension of $d_{r*,s*}$
produced by setting $d_{n/2+l+1}$ to $1$.
Therefore, by the submodularity of $g$, 
$g(d_{r*,s1})-g(d_{r*,s*}) \geq g(d_{r1,s1}) - g(d_{r1,s*})$,
and thus
$g(d_{r*,s1})-g(d_{r*,s*}) \geq Q/2^{l+1}$.

Let $A = g(d_{r*,s1}) - g(d_{r*,s*})$ and
$B = Q - g(d_{r*,s1})$. 
Thus $A \geq Q/2^{l+1}$, and
$A+B = Q-g(d_{r*,s*}) = Q-g(d_{r,s}) \leq Q/2^l$ 
where the last inequality is from the original assumptions on $r$ and $s$.
It follows that $B = Q - g(d_{r*,s1}) \leq Q/2^{l+1}$, 
and the claim holds for $l+1$, because
$r*,s1$ have the necessary properties.

Taking $l = n/2$,
the claim says
there exists $d_{r,s}$
such that $Q - g(d_{r,s}) \leq Q/2^{n/2}$.
Since $g$ is integer-valued, 
$Q \geq 2^{n/2}$.
\end{proof}

The above lemma immediately implies the following theorem.

\begin{theorem}
    \label{thm:negative}
Let $n$ be even. Let $f:\{0,1\}^n \rightarrow \{0,1\}$ 
be the Boolean function represented
by the read-once DNF formula 
$\phi = t_1 \vee t_2 \vee \ldots \vee t_{n/2}$
where each $t_i = x_i x_{n/2+i}$.
The $Q$-value of $f$ is at least $2^{n/2}$.
\end{theorem}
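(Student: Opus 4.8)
The statement follows by checking that the read-once function $f = \bigvee_{i=1}^{n/2} x_i x_{n/2+i}$ satisfies the two hypotheses of Lemma~\ref{technical} and then invoking that lemma directly; the phrase ``immediately implies'' signals that the only work is this verification. The plan is therefore to fix an arbitrary $n' \le n/2$ and an arbitrary $b \in \{0,1,*\}^n$ of the prescribed form (with $b_i = b_{n/2+i} = *$ for every $i \in \{n'+1,\dots,n/2\}$), and to argue properties (1) and (2) using the single clean characterization $f(x)=1$ iff some pair $(x_i, x_{n/2+i})$ equals $(1,1)$, i.e.\ some term $t_i$ is satisfied.

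For property (1), assume each active index $i \le n'$ has exactly one of $b_i, b_{n/2+i}$ equal to $1$ and the other equal to $*$. First I would show $b$ is not a $1$-certificate: extend $b$ by setting every $*$ coordinate to $0$; then every active pair contributes exactly one $1$ and one $0$, and every inactive pair contributes two $0$'s, so no term is satisfied and $f$ evaluates to $0$ on this extension. Next I would show $b$ is not a $0$-certificate: when $n' \ge 1$, take the single $1$ appearing in the pair indexed by $1$ and set its $*$-partner to $1$, satisfying $t_1$ and forcing $f=1$; when $n'=0$, $b$ is the all-$*$ vector, which is not a certificate since $f$ is non-constant ($f(0,\dots,0)=0$ and $f(1,\dots,1)=1$). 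This establishes (1).

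For property (2), assume $b_{n'}=b_{n/2+n'}=1$ (so $n' \ge 1$) while the pairs $i < n'$ each have one coordinate equal to $1$ and its partner equal to $*$. Here term $t_{n'}$ is already satisfied by $b$, so the partial assignment $b'$ that sets precisely $x_{n'}=x_{n/2+n'}=1$ and leaves all other coordinates $*$ is a $1$-certificate of $f$, and it is contained in $b$ (in the sense $b \sim b'$, since $b$ agrees with $b'$ on both of its non-$*$ coordinates). Hence $b$ contains a $1$-certificate, giving (2).

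With both hypotheses verified, Lemma~\ref{technical} yields that the $Q$-value of $f$ is at least $2^{n/2}$, completing the proof. I do not expect a genuine obstacle here: the only points requiring care are the boundary case $n'=0$ in property (1) and the precise reading of ``contains a $1$-certificate'' in property (2), which refers to the extension relation $\sim$ rather than to equality; both are handled by the arguments above.
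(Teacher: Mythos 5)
Your proposal is correct and takes exactly the route the paper intends: the paper simply asserts that Lemma~\ref{technical} ``immediately implies'' the theorem, leaving the verification of hypotheses (1) and (2) to the reader, and your checks (including the boundary case $n'=0$ and the reading of ``contains'' via the extension relation $\sim$) supply precisely that verification. Nothing further is needed.
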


The above theorem shows that the $Q$-value approach
will not yield a good approximation bound for
either read-once DNF formulas 
or for DNF formulas with terms
of length 2.  

In the next theorem, we show that there is a particular linear threshold function whose $Q$-value is at least $2^{n/2}$. It follows that the $Q$-value approach will not yield a good
approximation bound for linear-threshold formulas either.

We note that the function described in the next theorem has been sudied before.
As mentioned in~\cite{Hastad94}, 
there is a lower bound of essentially $2^{n/2}$
on the size of the largest integer coefficients in any
representation of the function as a 
linear threshold formula with integer coefficients.

%\vspace{-1.5em}
\begin{theorem}
Let $f(x_1, \ldots, x_n)$ be the function defined for
even $n$, whose value is 1 iff the number represented in binary by bits $x_1 \ldots x_{n/2}$
is strictly less than the number represented in binary by bits $x_{n/2+1},\ldots, x_n$, and 0 otherwise. The $Q$-value of $f$ is at least $2^{n/2}$. \end{theorem}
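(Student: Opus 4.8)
The plan is to mimic the inductive ``halving'' argument of Lemma~\ref{technical}, but with partial assignments and certificates tailored to the comparison function, since (as I explain at the end) $f$ does \emph{not} satisfy the hypotheses of that lemma. Throughout, read $x_1,\dots,x_{n/2}$ as the number $X$ with $x_1$ the most significant bit, and $x_{n/2+1},\dots,x_n$ as $Y$ with $x_{n/2+1}$ its most significant bit, and pair bit $i$ of $X$ with bit $i$ of $Y$. Fix any assignment-feasible utility function $g$ for $f$ with goal value $Q$. For $w\in\{0,1\}^l$ with $0\le l\le n/2$, let $D_w$ be the partial assignment that sets $x_i=x_{n/2+i}=w_i$ for $i\le l$ and leaves every other coordinate $*$; thus $X$ and $Y$ are known and \emph{equal} on their top $l$ bits. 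The key certificate facts are: because the top $l$ bits agree under $D_w$, extending $D_w$ at position $l+1$ by $(x_{l+1},x_{n/2+l+1})=(0,1)$ creates a first difference with $X<Y$, a $1$-certificate, while $(1,0)$ forces $X>Y$, a $0$-certificate; in both cases $g$ equals $Q$ there.

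I would then prove by induction on $l$ that for each $0\le l\le n/2-1$ there is $w\in\{0,1\}^l$ with $Q-g(D_w)\le Q/2^l$. The base case $l=0$ is immediate. For the step, write $g_0=g(D_w)$, $\Delta=Q-g_0$, let $g_{ab}$ denote the utility of the extension of $D_w$ setting $(x_{l+1},x_{n/2+l+1})=(a,b)$, and let $a_0,a_1,b_0,b_1$ be the utilities of the four single-coordinate extensions of $D_w$ at position $l+1$. Two applications of submodularity, using $g_{01}=g_{10}=Q$, give $\Delta=g_{01}-g_0\le(a_0-g_0)+(b_1-g_0)$ and $\Delta=g_{10}-g_0\le(a_1-g_0)+(b_0-g_0)$, so the four single-coordinate gains sum to at least $2\Delta$ and their maximum is at least $\Delta/2$. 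By monotonicity $g_{00}$ dominates both $a_0$ and $b_0$, and $g_{11}$ dominates both $a_1$ and $b_1$, so $\max(g_{00},g_{11})-g_0\ge\Delta/2$; hence one of the prefix extensions $D_{w0},D_{w1}$ has gap at most $\Delta/2$, advancing the induction.

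For the final halving I would deliberately \emph{not} complete the prefix (that would force $X=Y$, a certificate with gap $0$). Instead, starting from $w$ of length $n/2-1$ with $Q-g(D_w)\le Q/2^{n/2-1}$, I extend only a single coordinate of the last pair. The $(0,1)$ extension is still a $1$-certificate, so again $\Delta\le(a_0-g_0)+(b_1-g_0)$, and one of the two single-coordinate extensions ``$x_{n/2}=0$'' or ``$x_n=1$'' has gap at most $\Delta/2\le Q/2^{n/2}$. Crucially, each of these two extensions leaves the comparison undecided, since the one remaining bit of the last pair can still flip the outcome; so the chosen extension is not a certificate, its gap is a positive integer, and $1\le Q/2^{n/2}$ yields $Q\ge 2^{n/2}$.

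The main obstacle, and the reason this is not a one-line appeal to Lemma~\ref{technical}, is that $f$ violates that lemma's hypotheses: setting all of $X$'s bits to $1$ is a $0$-certificate (so property (1) fails at $n'=n/2$), and ``both coordinates of a pair equal to $1$'' never by itself determines $X<Y$ (so property (2) fails outright). The real work is therefore in selecting the right objects: equal-prefix assignments $D_w$ in place of the lemma's ``one-known-$1$-per-pair'' assignments, and first-difference certificates in place of ``both-$1$'' certificates. The one genuinely new wrinkle relative to Lemma~\ref{technical} is the endpoint. There the terminal assignment is a non-certificate for free, by property (1) at $n'=n/2$; here completing the prefix produces a certificate, so I must stop one coordinate short of the bottom and verify by hand that the resulting assignment is still undecided. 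I expect verifying that halving step (the two submodularity applications plus the monotonicity comparison) and the undecidedness of the terminal assignment to be the only places requiring care.
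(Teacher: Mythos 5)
Your proof is correct, but it takes a genuinely different route from the paper's. The paper disposes of this theorem in two lines: it defines $f'(x_1,\ldots,x_n)=f(\neg x_1,\ldots,\neg x_{n/2},x_{n/2+1},\ldots,x_n)$, observes that $f'$ \emph{does} satisfy the hypotheses of Lemma~\ref{technical} (negating the first half turns your first-difference certificates into the lemma's ``both coordinates $1$'' certificates and its equal-prefix non-certificates into the ``one known $1$ per pair'' non-certificates), and notes that the $Q$-value is invariant under negation of input variables. You correctly diagnose that $f$ itself violates both hypotheses of the lemma --- that observation is exactly why the paper passes to $f'$ --- but you miss the negation trick and instead re-run the halving induction from scratch with invariants tailored to the comparison function: equal-prefix partial assignments $D_w$ in place of the lemma's one-star-per-pair assignments, the $(0,1)$/$(1,0)$ first-difference certificates in place of the both-$1$ certificates, and a hand-verified non-certificate endpoint one coordinate short of a full prefix (needed because a complete equal prefix forces $X=Y$ and is itself a $0$-certificate). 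All the steps check out: the two submodularity inequalities $\Delta\le(a_0-g_0)+(b_1-g_0)$ and $\Delta\le(a_1-g_0)+(b_0-g_0)$ are correct applications of diminishing returns, the monotonicity step $\max(g_{00},g_{11})-g_0\ge\Delta/2$ is sound, and the terminal single-coordinate extensions are indeed undecided, so integrality gives $Q\ge 2^{n/2}$. What the paper's route buys is brevity and reuse of Lemma~\ref{technical} as stated; what yours buys is a self-contained argument that exposes which structural features of the comparison function actually drive the lower bound, without needing the (mildly non-obvious) fact that assignment-feasibility and $Q$-value are preserved under relabeling $0\leftrightarrow 1$ on a subset of coordinates.
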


\begin{proof}
%We sketch the proof for $f_1$.
%The proof for $f_2$ is similar, and
%a full proof 
%can be found in Appendix~\ref{append:negative}.
%For $f_1$, consider the first term of the DNF, $t_1 = x_1 x_2$.
%Suppose $g$ is an assignment feasible utility function for $f_1$. 
%Consider the partial assignment setting the variables $x_1$ and $x_2$ of
%$t_1$ to 1,
%leaving the other variables unassigned.
%Since the assignment satisfies $t_1$, the value of $g$ on it is $Q$. 
%By submodularity, either $x_1$ or $x_2$ on its own must
%cause $g$ to increase by at least $Q/2$. 
%Whichever variable achieves this, set it to 1, leaving the other variable unassigned,
%and we are at least halfway to achieving utility $Q$. 
%If we extend this partial assignment by setting both $x_3$ and $x_4$ to 1,
%the value of $g$ is $Q$ again.
%Again by submmodularity, either $x_3$ or $x_4$ alone (with the previous 
%assignment to $x_1$ and $x_2$), must at least halve
%the remaining distance to $Q$, and thus cause
%$g$ to be at least $Q/2 + Q/4$.
%Whichever variable achieves this, set it to 1 and leave the other variable unassigned.
%Continuing this argument, and
%using the fact that $g$ is integer valued, implies that the
%value of $Q$ is at least $2^{n/2}$.
We define a new function: 
$f'(x_1, \ldots, x_n) = f(\neg{x_1}, \ldots, \neg{x_{n/2}}, x_{n/2+1}, \ldots, x_n)$.
That is, $f'(x_1, \ldots, x_n)$ is computed by negating the
assignments to the first $n/2$ variables, and then computing
the value of $f$ on the resulting assignment.
Function $f'$ obeys the conditions of Lemma~\ref{technical},
and so has $Q$-value at least $2^{n/2}$.
Then $f$ also has $Q$-value at least $2^{n/2}$, because
the $Q$-value is not changed by the negation of input variables.
\end{proof}

\vspace{-.5em}

Given the limitations of the $Q$-value approach 
we can ask whether there are good alternatives.  
Our new bound on Adaptive Greedy 
is $O(\log P)$, where $P$ is the maximum amount of utility
gained by testing a single variable $x_i$, so we might
hope to use $P$-value in place of $Q$-value.
However, this does not help much:
testing all $n$ variables yields
utility $Q$, so testing one of them alone must yield utility
at least $Q/n$, implying that $P \geq  Q/n$.  
Another possibility might be to exploit the fact that Golovin and Krause's
bounds on Adaptive Greedy apply to a more general class of
utility functions than the assignment feasible utility functions, but we do not pursue that possibility.
Instead, we give a new algorithm for the SSSC problem.

\section{Adaptive Dual Greedy and a 3-approximation for Linear Threshold Evaluation}
\label{sec:adg}

We now present ADG,
our new algorithm for the binary version of the SSSC
problem.  It easily extends to the $k$-ary version, 
where $k > 2$, with no change in the approximation bound.
Like Fujito's Dual Greedy algorithm for the (non-adaptive) SSC problem,
it is based on 
Wolsey's IP for the (deterministic) Submodular
Set Cover Problem.  We present Wolsey's IP in Figure~\ref{WolseyIP}.

\begin{figure}[h!tb]
\caption{Wolsey's IP for submodular set cover}
\label{WolseyIP}
\begin{tabular}{lll}
Min & $\sum_{j \in N} c_jx_j$\\
s.t. \\
& $\sum_{j \in N} g(S \bigcup \{j\}) - g(S) \geq Q-g(S)$ & $\forall S \subseteq N$\\
& $x_j \in \{0,1\}$ & $\forall j \in N$
\end{tabular}
\end{figure}

Wolsey proved that an assignment $x \in \{0,1\}^n$ to the variables in this IP
is feasible 
iff $\{j|x_j = 1\}$ is a cover for the associated Submodular
Set Cover instance, i.e., iff
$g(\{j|x_j = 1\}) = Q$.  We call this Wolsey's property.

In Figure~\ref{LPADG}, we present a new LP, based on Wolsey's IP, which we call LP1.
We use the following notation:
$W = \{w \in \{0,1,*\}^n~|~w_j = *$ for exactly one value of $j\}$.
For $w \in W$, $j(w)$ denotes the $j \in N$ where $w_j = *$.
Further,
$w^{(0)}$ and $w^{(1)}$ denote the extensions of $w$ obtained from $w$ by setting $w_{j(w)}$
to 0 and 1, respectively.
For $a \in \{0,1\}^n$, $j \in N$,
$a^j$ denotes the the
partial assignment obtained from $a$
by setting $a_j$ to $*$. 
We will rely on the following observation, which we call
the Neighbor Property:   Let $T$ be a decision tree solving the SSSC problem.
Given two assignments $a,a' \in \{0,1\}^n$
differing only in bit $j$, either $T$ tests $j$ on both input $a$ and input $a'$,
or on neither.  
\begin{figure}[h!]
\caption{LP1: the Linear Program for Lower Bounding Adaptive Dual Greedy}
\label{LPADG}
\begin{tabular}{lll}
Min & $\sum_{w \in W} c_{j(w)}p(w)x_w$\\
s.t. \\
& $\sum_{j \in N} g_{S,a}(j)x_{a^j} \geq Q-g(S,a)$ & $\forall a \in \{0,1\}^n, S \subseteq N$\\
& $x_w \geq 0$ & $\forall w \in W$
\end{tabular}
\end{figure}
\vspace{-.5em}
\begin{lemma}
\label{lowerbound}
The optimal value of LP1 lower bounds the 
expected cost of an optimal decision tree $T$ for the SSSC instance on $g$,
$p$, and $c$.
\end{lemma}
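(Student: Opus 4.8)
The plan is to exhibit a single feasible solution to LP1 whose objective value equals the expected cost of the optimal tree $T$; since LP1 is a minimization, its optimum can then only be smaller, which is exactly the claimed bound.

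First I would build the LP solution directly from $T$. For each $w \in W$, the partial assignment $w$ fixes every coordinate except $j(w)$, so the only two completions $w^{(0)}$ and $w^{(1)}$ differ solely in bit $j(w)$. By the Neighbor Property, $T$ either tests $j(w)$ on both of these inputs or on neither, so I can unambiguously set $x_w = 1$ when $T$ tests $j(w)$ on the inputs consistent with $w$, and $x_w = 0$ otherwise. The point that makes this well defined is precisely that, by the Neighbor Property, whether $T$ tests a coordinate $j$ on an input depends only on the other coordinates of that input. Note that in the constraint of LP1 the variable $x_{a^j}$ then equals $1$ iff $T$ tests $j$ on input $a$.

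Next I would compute the objective. Grouping by the missing coordinate, $\sum_{w \in W} c_{j(w)} p(w) x_w = \sum_{j \in N} c_j \sum_{w : j(w)=j} p(w) x_w$. For fixed $j$, the $w$ with $j(w) = j$ range over all assignments to the coordinates other than $j$, and $p(w)$ is exactly the probability that a random $x \sim D_p$ agrees with $w$ off coordinate $j$; these events partition the sample space. Since ``$T$ tests $j$'' is a function of those same coordinates, $\sum_{w : j(w)=j} p(w) x_w$ equals the probability that $T$ tests $j$, so the objective becomes $\sum_{j \in N} c_j \Pr[T \text{ tests } j] = E[\text{cost of } T]$, the expected cost of $T$.

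Then I would verify feasibility. Fix $a \in \{0,1\}^n$ and $S \subseteq N$, and let $L = \{j : x_{a^j} = 1\}$ be the set of variables $T$ tests on input $a$. Setting $h(R) := g(R,a)$ for $R \subseteq N$, the paper's definitions make $h$ a monotone submodular set function with $h(\emptyset) = 0$ and, since $a \in \mathcal{O}^n$, $h(N) = g(a) = Q$; moreover $g_{S,a}(j) = h(S \cup \{j\}) - h(S)$ directly from the definition of $g_{S,a}(j)$. Because $T$ solves the SSSC instance, on input $a$ the partial assignment recorded at the leaf (values $a$ on $L$, stars elsewhere) is a cover, so $h(L) = Q$ and hence $h(S \cup L) = Q$ by monotonicity. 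The LP constraint is exactly $\sum_{j \in L}\bigl(h(S \cup \{j\}) - h(S)\bigr) \geq Q - h(S)$, which follows from the standard submodular inequality $h(S \cup L) - h(S) \leq \sum_{j \in L}\bigl(h(S \cup \{j\}) - h(S)\bigr)$ (telescope the marginals of the elements of $L$ over $S$ and apply diminishing returns) together with $h(S \cup L) = Q$. The main obstacle is this feasibility step: one must recognize that restricting $g$ to completions consistent with $a$ yields an ordinary monotone submodular set function $h$, that the path-set $L$ of $T$ on $a$ is an $h$-cover, and that the constraint is then precisely Wolsey's submodular inequality for the cover $L$. Throughout, the Neighbor Property is the glue that makes $x_w$, and hence both the objective rewriting and the constraint, well defined.
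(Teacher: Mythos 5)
Your proposal is correct and follows essentially the same route as the paper: define $x_w$ from whether $T$ tests $j(w)$ on the two completions of $w$ (well defined by the Neighbor Property), show the objective equals the expected cost of $T$, and verify the constraints for each fixed $a$ by viewing $g(\cdot,a)$ as an ordinary submodular set function covered by the test set of $T$ on $a$. The only difference is cosmetic: the paper invokes Wolsey's property as a black box for feasibility, while you re-derive the needed direction via the standard telescoping/diminishing-returns inequality.
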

%\vspace{-1.5em}
\begin{proof}
Let $X$ be the assignment to the variables $x_w$ in the LP
such that $x_w = 1$ if
$T$ tests $j$ on both assignments extending $w$,
and $x_w = 0$ otherwise.
With respect to $X$,
the expected cost of $T$ equals $\sum_{a \in \{0,1\}^n} \sum_j c_j x_{a^j} p(a)$.
This equals the value of the objective function, because for
$a, a' \in \{0,1\}^n$ differing only in bit $j$, $P(a) + P(a') = P(a^j)$.
Finally,
for any fixed $a \in \{0,1\}^n$, the subset of constraints 
involving $a$, one for each $S \subseteq N$,
is precisely the set of constraints of Wolsey's IP, if we take the utility
function to be $g_a$ such that $g_a(S) = g(S,a)$.
Since $T$ produces a cover for every $a$, by Wolsey's property, the constraints 
of LP1 involving $a$ are satisfied.
Thus $X$ is a feasible solution to LP1, and
the optimal value of the LP is at most the expected cost of the optimal tree.
\end{proof}
\vspace{-.5em}
We present the pseudocode for ADG
in Algorithm \ref{alg:dualgreedy}. (In Step \ref{mainstep}, assume that if 
$E[g_b(x)] = 0$, the expression evaluates to 0.)
Its main loop is analogous to the main loop in Fujito's Dual Greedy algorithm, except that ADG uses expected
increases in utility, instead of known, deterministic increases
in utility and 
the results of the tests performed on the items already in the cover.
The quantity in Step 5 of ADG relies only on the outcomes of completed tests,
so ADG can be executed in our stochastic setting.
\begin{figure}[h!]
\caption{LP2: the Linear Program for Adaptive Dual Greedy}
\label{LP2}
\begin{tabular}{lll}
Max & $\sum_{a \in \{0,1\}^n} \sum_{S \subseteq N} p(a)~(g(N,a)-g(S,a))~y_{S,a}$ \\
s.t. \\
& $\sum_{S \subseteq N} (1-p_{j(w)})~g_{S,w^{(0)}}(j)~y_{S,w^{(0)}}  + \sum_{S \subseteq N} p_{j(w)}~g_{S,w^{(1)}}(j)~y_{S,w^{(1)}}\leq c_j$ & $\forall w \in W$\\
& $y_{S,a} \geq 0$ & $\forall S \subseteq N, a \in \{0,1\}^n$
\end{tabular}
\end{figure}
      \begin{algorithm}[H]
\caption{Adaptive Dual Greedy}
\label{alg:dualgreedy}

\begin{algorithmic}[h!]

	\STATE $b \gets (*, * \ldots, *)$, $y_S \gets 0$ for all $S \subseteq N$
        \STATE $F^0 = \emptyset$, $l \gets 0$
	\WHILE {$b$ is not a solution to SSSC ($g(b) < Q$)}
        \STATE $l \gets l+1$
	\STATE $j_l \gets \argmin_{j \not\in F^{l-1}} \frac{c_j - \sum_{S:y_S \neq 0} (E[g_{S,b}(j)]) y_S}{E[g_b(j)]}$ 
        \label{mainstep}
        % \COMMENT Increase $y_{F^{l-1}}$ until for some $j \not\in dom(b)$, $h'_{j,b}(y) = c_j$
        \STATE $y_{F^{l-1}} \gets \frac{c_{j_l} - \sum_{S:y_S \neq 0} (E[g_{S,b}(j)]) y_S}{E{[g_b(j)]}}$	
	\STATE $k \gets$ the state of $j_l$ ~~~~~~~~~\COMMENT ~~``test'' $j_l$
        \STATE $F^l \gets F^{l-1} \bigcup \{j_l\}$ ~~~~~~~~\COMMENT ~~$F^l = dom(b)$
	\STATE $b_{j_l} \gets k$
	\ENDWHILE
	\STATE return $b$
\end{algorithmic}
\end{algorithm}

% We can execute the algorithm in our stochastic setting because
% in Step~\ref{mainstep}, the summation
% is only over $S$ such that $y_S \neq 0$, and we only assign a 
% non-zero value to a variable $y_S$ after
% we have tested all elements in $S$.
% In Step~\ref{mainstep}, for each non-zero $y_S$,
% we have that $S \subseteq dom(b)$,
% so $g_{S,b}(j) = g_{S,x}(j)$.

% As in Fujito's algorithm, we use a primal-dual approach. 
% In order to use this approach here, though, we need to make sure that
% it will not result in an algorithm that requires us to know
% the value of bits of the input that we have not yet tested.
% To avoid this problem, we multiply both sides
% of each constraint in Figure~\ref{} by $p(a)$, obtaining an equivalent LP,
% before taking the dual.
We now analyze Adaptive Dual Greedy.  In the LP in Figure~\ref{LPADG},
there is a constraint for each $a,S$ pair.  Multiply both sides of
such constraints by $p(a)$,  to form an equivalent LP. Take the dual
of the result, and divide both sides of each constraint by $p(w)$.
(Note that $p(a)/p(w) = p(a,j(w))$.)
We give the resulting LP, which we call LP2, in Figure~\ref{LP2}.
The variables in it are $y_{S,a}$, where $S \subseteq N$
and $a \in \{0,1\}^n$.

Consider running ADG on an input $a \in \{0,1\}^n$.
Because $g(a) = Q$, ADG is guaranteed to terminate with an output
$b$ such that $g(b) = Q$.
Let $C(a) = dom(b)$.
That is, $C(a)$ is the set of items that ADG tests and inserts into the cover it constructs for $a$.
We will sometimes treat $C(a)$ as a sequence of items, ordered by their insertion order.
ADG constructs
an assignment to the variables $y_S$ (one for each $S \subseteq N$)
when it is run on input $a$.
Let $Y$ be the assignment to the variables $y_{S,a}$ of LP2, 
such that $Y_{S,a}$ is the value of ADG variable $y_S$
at the end of running ADG on input $a$.

% for $a$.  We sometimes treat $C(a)$ as a sequence, ordered by
% order of insertion into the cover.

We now show that $Y$ is a feasible solution to LP2 and that
for each $a$, and each $j \in C(a)$, $Y$ makes the constraint for $d=a^j$ tight.
For $w \in W$, let $h'_w(y)$ denote the function of
the variables $y_{S,a}$ computed in the left hand side of the constraint for $w$ in LP2.

\begin{lemma} \label{mainlemma}
For every $a \in \{0,1\}^n$, $j \in N$,\\
(1) $h'_{a^j}(Y) = c_j$ if $j \in C(a)$, and \\
(2) $h'_{a^j}(Y) \leq c_j$ if $j \not\in C(a)$.
\end{lemma}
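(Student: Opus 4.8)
The plan is to run a primal–dual (Dual-Greedy) analysis in the tree-structured stochastic setting, where the entire argument hinges on identifying which dual variables $Y_{S,a}$ are nonzero and on a vanishing property of marginal utilities. First I would record the \emph{support} of $Y$. When ADG is run on a fixed input $a$ and produces the cover $C(a)=F^t=\{j_1,\dots,j_t\}$ in insertion order, the only variables $y_S$ that ever receive a nonzero value are the prefixes $F^0=\emptyset, F^1,\dots,F^{t-1}$, since Step 6 only ever assigns to $y_{F^{l-1}}$. Thus $Y_{S,a}\neq 0$ forces $S$ to be a proper prefix of $C(a)$; in particular $j_l\notin F^{l-1}$.

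The central reduction is to rewrite $h'_{a^j}(Y)$ using the Neighbor Property. Writing $w=a^j$, the inputs $w^{(0)}$ and $w^{(1)}$ differ only in bit $j$, so by the Neighbor Property ADG tests $j$ on both runs or on neither, and the two executions are \emph{identical up to the moment $j$ is tested} (the items tested before $j$ observe the same states). Hence $Y_{S,w^{(0)}}=Y_{S,w^{(1)}}$ for every prefix $S$ produced before $j$ is inserted. The key algebraic identity I would verify is that, whenever $j\notin S$, we have $(w^{(0)})^S=(w^{(1)})^S=:b^S$, so that $g_{S,w^{(0)}}(j)=g_{b^S}(j,0)$ and $g_{S,w^{(1)}}(j)=g_{b^S}(j,1)$ are the two deterministic marginals; consequently the bracketed coefficient of $h'_{a^j}$ collapses to $(1-p_j)g_{S,w^{(0)}}(j)+p_j g_{S,w^{(1)}}(j)=E[g_{S,b}(j)]$, exactly the expected marginal ADG uses in Steps 5–6. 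Combined with the vanishing property $g_{S,\cdot}(j)=0$ whenever $j\in S$, this shows that every term of $h'_{a^j}(Y)$ coming from a prefix that already contains $j$ drops out. Therefore all post-divergence prefixes (which all contain $j$) contribute nothing, and $h'_{a^j}(Y)=\sum_l E[g_{F^l,b}(j)]\,y_{F^l}$ summed only over the shared prefixes with $j\notin F^l$.

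With this normal form the two cases follow by telescoping. If $j=j_m\in C(a)$, then rearranging the Step 6 assignment $y_{F^{m-1}}=(c_j-\sum_{l<m-1}E[g_{F^l,b}(j)]y_{F^l})/E[g_b(j)]$, together with the identity $E[g_b(j)]=E[g_{F^{m-1},b}(j)]$, yields $\sum_{l\le m-1}E[g_{F^l,b}(j)]y_{F^l}=c_j$; since later prefixes contain $j$ and contribute $0$, this is precisely $h'_{a^j}(Y)=c_j$, giving~(1). If $j\notin C(a)$, the runs on $w^{(0)}$ and $w^{(1)}$ coincide entirely, and at each step $l+1$ item $j$ is a never-chosen candidate in the $\argmin$ of Step 5; optimality of $j_{l+1}$ over $j$ gives $E[g_{F^l,b}(j)]\,y_{F^l}\le c_j-\sum_{l'<l}E[g_{F^{l'},b}(j)]y_{F^{l'}}$, and summing through the last step telescopes to $h'_{a^j}(Y)\le c_j$, giving~(2).

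The step I expect to be the main obstacle is the middle one: correctly matching the dual variables across the paired runs on $w^{(0)}$ and $w^{(1)}$, and showing that precisely the terms which could break this pairing — those from prefixes created after $j$ is tested, where the two runs diverge — are exactly the terms killed by $g_{S,\cdot}(j)=0$ for $j\in S$. This simultaneous use of the Neighbor Property (to share the pre-$j$ dual values) and the marginal-vanishing property (to discard the post-$j$ terms) is what makes the equality in~(1) \emph{exact} rather than merely an inequality, and it is where submodularity and the precise definition of $C(a)$ must be handled with care.
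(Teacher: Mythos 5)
Your proposal is correct and follows essentially the same route as the paper's proof: it establishes that $Y$ is supported on prefixes of $C(a)$, uses the Neighbor Property to identify the dual values of the paired runs on $w^{(0)}$ and $w^{(1)}$ up to the point where $j$ is tested, kills the post-divergence terms via $g_{S,\cdot}(j)=0$ for $j\in S$, and then derives equality from the Step~6 update for tested items and the telescoping greedy inequality for untested ones. The paper organizes this last part as an induction on the iteration count $t$ (via the intermediate assignments $Y^t$) rather than a direct telescoping sum, but that is a presentational difference, not a different argument.
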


\begin{proof}
Assignment $Y$ assigns non-zero values only to variables $y_{S,a}$
where $S$ is a prefix of sequence $C(a)$.

For $t \in N$, let $Y^t$ denote the assignment to the $y_{S,a}$ variables such that
$y_{S,a}$ equals the value
of variable $y_S$ at the end of iteration $t$ of the loop in ADG,
when ADG is run on input $a$.
(If ADG terminates before iteration $t$, $y_{S,a}$
equals the final value of $y_S$).
Let $Y^0$ be the all 0's assignment. 
We begin by showing that for all $t$ and $a$,
$h'_{a^j}(Y^t) = \sum_{S \subseteq N} E[g_{a(S)}(j)] Y^t_{S,a}$. Recall that 
$a(S) \in \{0,1,*\}^n$ such that $\forall i \in S$, $a(S)_i = a_i$,  and $\forall j \notin S$, $a(S)_j = \ast$. 

Consider running ADG on $w^{(0)}$ and $w^{(1)}$.
Since ADG corresponds to a decision tree, the Neighbor Property holds. Then, if $j$ is never tested on $w^{(0)}$, it is never tested on $w^{(1)}$.
and $Y^t_{S,w^{(0)}} = Y^t_{S,w^{(1)}}$ for all $S,t$.
Thus $h'_{w}(Y^t) = \sum_{S \subseteq N} (p_j g_{S,w^{(1)}}(j)) + (1-p_j) g_{S,w^{(0)}}(j))Y^t_{S,w^{(1)}} = \sum_{S \subseteq N} E[g_{S,w}(j)] Y^t_{S,w^{(1)}}$
for all $t$.

Now suppose that $j$ is tested  in iteration $\hat{t}$ on input $w^{(1)}$, and hence on input $w^{(0)}$.
For $t \leq \hat{t}$, $Y^t_{S,w^{(1)}} = Y^t_{S,w^{(0)}}$ for all $S$.
This is not the case for $t > \hat{t}$.
However, in iterations $t > t'$, $j$ is already part of the cover, so
ADG assigns values only to variables $y_S$ where $j \in S$.
For such $S$, $g_{S,w^{(1)}}(j) = 0$.
Thus in this case also, 
$h'_{w}(Y^t) = \sum_{S \subseteq N} E[g_{S,w}(j)] Y^t_{S,w^{(1)}}$
for all $t$.
% Further,
% $h'_{a^j}(Y^t) = h'_{a^j}(Y)$ for all $t \geq \hat{t}$.

It is now easy to show by induction on $t$ that the 
the two properties of the lemma hold for
every $Y^t$, and hence for $Y$.
They hold for $Y^0$.
Assume they hold for $Y^t$.
Again consider assignments $w^{(1)}$ and $w^{(0)}$.
If $j$ was tested on $w^{(1)}$ and $w^{(0)}$ in some iteration $t' < t+1$, 
then $h'_{a^j}(Y^t) = h'_{a^j}(Y^{t+1})$ by the arguments above. 
If $j$ is tested in iteration $t+1$ on both inputs,
then the value assigned to $y_{F^{l-1}}$ by ADG
on $w^{(1)}$ (and $w^{(0)}$) equals 
$(c_j -  h'_{w}(Y^t))/E[g_{F^{l-1},w}(j)]$, and thus
$h'_{w}(Y^{t+1}) = c_j$.
If $j$ is not tested in iteration $t+1$, and was
not tested earlier,
the inductive assumption and the
greedy choice criterion ensure that
$h'_{w}(Y^{t+1}) \leq c_j$. 
\end{proof}
%\vspace{-.5em}

% The following lemma gives an expression for the
The expected cost of the cover produced by ADG on a random input $a$
is $\sum\nolimits_{a \in \{0,1\}^n}\sum\nolimits_{j \in C(a)}p(a)c_j$.

\begin{lemma} \label{lem:tightconstraints}
$\sum\nolimits_{a \in \{0,1\}^n} \sum\nolimits_{j \in C(a)} p(a) c_j =
\sum\nolimits_{a \in \{0,1\}^n} \sum\nolimits_{S \subseteq N} \sum\nolimits_{j:j \in C(a)} p(a) g_{S,a}(j) Y_{S,a}$
\end{lemma}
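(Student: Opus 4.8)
The plan is to prove the identity one test index at a time: I would swap the order of summation on both sides so that the outermost sum runs over $j \in N$, and then show that the $j$-th slices of the two sides agree. The key structural device is to group the full assignments into neighbor pairs. For each $w \in W$ with $j(w) = j$, the extensions $w^{(0)}$ and $w^{(1)}$ differ only in coordinate $j$, and by the Neighbor Property $j \in C(w^{(0)})$ if and only if $j \in C(w^{(1)})$. Hence the assignments $a$ with $j \in C(a)$ that contribute to the $j$-th slice of either side come exactly in such pairs, indexed by those $w$ with $j(w)=j$ and $j \in C(w^{(0)})$.

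Next I would compute the combined contribution of one such pair to the right-hand side. Since $Y$ assigns nonzero values only to $y_{S,a}$ with $S$ a prefix of $C(a)$ (so that $j \notin S$ exactly when $j$ has not yet been inserted), and since $g_{S,a}(j)=0$ once $j \in S$, the two assignments $w^{(0)}, w^{(1)}$ together contribute, using $p(w^{(0)}) = (1-p_j)\,p(w)$ and $p(w^{(1)}) = p_j\,p(w)$,
\[
p(w)\Bigl[(1-p_j)\sum_{S \subseteq N} g_{S,w^{(0)}}(j)\,Y_{S,w^{(0)}} + p_j\sum_{S \subseteq N} g_{S,w^{(1)}}(j)\,Y_{S,w^{(1)}}\Bigr].
\]
The bracketed quantity is exactly $h'_w(Y)$, the left-hand side of the constraint for $w$ in LP2 (Figure~\ref{LP2}). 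Because $j = j(w) \in C(w^{(0)})$, Lemma~\ref{mainlemma}(1) guarantees that this constraint is tight, i.e.\ $h'_w(Y) = c_j$, so the pair contributes precisely $p(w)\,c_j$.

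For the left-hand side the same pair contributes $p(w^{(0)})\,c_j + p(w^{(1)})\,c_j = p(w)\,c_j$, using the product-distribution relation $p(w^{(0)}) + p(w^{(1)}) = p(w)$. Summing over all qualifying $w$ and then over $j$, both sides reduce to $\sum_{j \in N}\sum_{w:\,j(w)=j,\,j \in C(w^{(0)})} p(w)\,c_j$, which establishes the lemma.

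The step I expect to be the main obstacle is reconciling the deterministic marginal increases $g_{S,a}(j)$ that appear on the right-hand side with the expectation over the outcome of test $j$ that is implicit in the tight-constraint conclusion of Lemma~\ref{mainlemma}: that lemma certifies $h'_w(Y)=c_j$, but $h'_w$ already mixes the two outcomes $w^{(0)}$ and $w^{(1)}$ with weights $1-p_j$ and $p_j$, whereas the right-hand side sums per-assignment quantities. The neighbor-pairing step is exactly what repackages the product-distribution weights $p(w^{(0)})$ and $p(w^{(1)})$ so that $g_{S,w^{(0)}}(j)$ and $g_{S,w^{(1)}}(j)$ recombine into $h'_w(Y)$; carrying out this bookkeeping carefully, and confirming that no spurious $S$ with $j \in S$ contributes, is the crux of the argument.
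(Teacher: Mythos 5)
Your proposal is correct and follows essentially the same route as the paper's proof: swap the summation order to make $j$ outermost, pair neighboring assignments $w^{(0)},w^{(1)}$ via the Neighbor Property, recombine the product-distribution weights so the per-assignment terms reassemble into $h'_w(Y)$, and invoke the tight constraint $h'_w(Y)=c_j$ from Lemma~\ref{mainlemma}. The only cosmetic difference is direction (you meet in the middle from both sides, the paper rewrites the left-hand side into the right-hand side by a single chain of equalities).
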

\begin{proof}
For $j \in N$, let $W^j = \{w \in W|w_j = *\}$. Then
{
\begin{flalign*}
&\sum\nolimits_a\sum\nolimits_{j:j \in C(a)} p(a) c_j\\
&= \sum\nolimits_j\sum\nolimits_{a:j \in C(a)} p(a) c_j   \hspace{19em} \text {switching the order of summation}\\
&= \sum\nolimits_j (~\sum\nolimits_{w \in W_j:j \in C(w^{(1)})}~p(w^{(1)})c_j + 
\sum\nolimits_{w \in W_j:j \in C(w^{(0)})} p(w^{(0)})c_j) \hspace{1.5em}
\text{grouping assignments by the value of bit $j$}\\
&= \sum\nolimits_j  (~\sum\nolimits_{w \in W_j:j \in C(w^{(1)})}~p(w^{(1)})c_j + p(w^{(0)})c_j)\hspace{1.5em}
\text{because $j \in C(w^{(1)})$ iff $j \in C(w^{(0)})$ by the Neighbor Property}\\
&= \sum\nolimits_j ~\sum\nolimits_{w \in W_j:j \in C(w^{(1)})}~p(w)c_j\\
% & \text{because $p(d')+p(d'')=p(d)$}\\
&= \sum\nolimits_j ~\sum\nolimits_{w \in W_j:j \in C(w^{(1)})}~p(w)h'_w(Y)\hspace{23em} 
\text{by Lemma~\ref{mainlemma}}\\
&= \sum\nolimits_j~\sum\nolimits_{w \in W_j:j \in C(w^{(1)})}~(p(w)\sum\nolimits_S(p_i~g_{S,w^{(1)}}(j)~Y_{S,w^{(1)}} + (1-p_i)~g_{S,w^{(0)}}(j)~Y_{S,w^{(0)}}))\hspace{1.5em}
\text{by the definition of $h'_w$}\\
&= \sum\nolimits_j~\sum\nolimits_{w \in W_j:j \in C(w^{(1)})}~\sum\nolimits_S (p(w^{(1)})~g_{S,w^{(1)}}(j)~Y_{S,w^{(1)}} + p(w^{(0)})g_{S,w^{(0)}}(j)Y_{S,w^{(0)}})\\
&= \sum\nolimits_j [(~\sum\nolimits_{w \in W_j:j \in C(w^{(1)})}~\sum\nolimits_S p(w^{(1)})g_{S,w^{(1)}}(j)Y_{S,w^{(1)}}) + 
(~\sum\nolimits_{w \in W_j:j \in C(w^{(0)})}~\sum\nolimits_S p(w^{(0)})g_{S,w^{(0)}}Y_{S,w^{(0)}})]\\ 
&\text{because $j \in C(w^{(1)})$ iff $j \in C(w^{(0)})$}\\
&= \sum\nolimits_j ~\sum\nolimits_{a:j \in C(a)}~\sum\nolimits_S p(a) g_{S,a}(j)Y_{S,a}\\
&=\sum\nolimits_a\sum\nolimits_S\sum\nolimits_{j:j \in C(a)} p(a) g_{S,a}(j) Y_{S,a}
\end{flalign*}
}
\end{proof}

We now give our approximation bound for ADG.
\vspace{-.5em}
\begin{theorem}
\label{thm:dualg}
Given an instance of SSSC with utility function $g$ and goal value $Q$,
ADG constructs a cover whose expected cost is
no more than 
a factor of $\alpha$ larger than the expected cost of the cover
produced by the optimal strategy, where
$\alpha  = \max  \frac{ \sum_{j \in C(a)} g_{S,a}(j) } { Q - g(S,a)}$,
with the max taken over all $a \in \{0,1\}^n$ and $S \in\mbox{Pref}(C(a))$
such that the denominator is non-zero.
Here $\mbox{Pref}(C(a))$ denotes the set of all prefixes of the cover $C(a)$ that ADG constructs on input $a$.
% $F^0, F^1, F^2, \ldots,$ where each $F^t$ corresponds to a prefix of the final cover $C(a)$.
\end{theorem}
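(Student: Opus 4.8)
The plan is to combine the three preceding lemmas through LP duality. Write $\mathrm{ADG} = \sum_{a \in \{0,1\}^n}\sum_{j \in C(a)} p(a) c_j$ for the expected cost of the cover produced by ADG, and let $\mathrm{OPT}$ denote the expected cost of the optimal strategy. By Lemma~\ref{lem:tightconstraints}, $\mathrm{ADG} = \sum_a \sum_{S \subseteq N} \sum_{j \in C(a)} p(a)\, g_{S,a}(j)\, Y_{S,a}$, where $Y$ is the assignment to the LP2 variables built by ADG. I would bound this sum term-by-term by $\alpha$ times the LP2 objective value at $Y$, and then invoke weak duality together with Lemma~\ref{lowerbound} to show that this objective value is at most $\mathrm{OPT}$.

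First I would record that $Y$ is feasible for LP2. This is exactly part (2) of Lemma~\ref{mainlemma}: for every $w = a^j \in W$ the left-hand side $h'_w(Y)$ equals $c_j$ when $j \in C(a)$ and is at most $c_j$ otherwise, so every LP2 constraint holds. Since LP2 is (equivalent to) the dual of LP1 via the scaling described just before its statement, weak duality gives that the LP2 objective at the feasible point $Y$ is at most the optimum of LP1, and Lemma~\ref{lowerbound} says that optimum is at most $\mathrm{OPT}$. Writing the LP2 objective at $Y$ as $\mathrm{OBJ}(Y) = \sum_a \sum_{S} p(a)\,(g(N,a) - g(S,a))\, Y_{S,a}$ and noting $g(N,a) = g(a) = Q$ for $a \in \{0,1\}^n$, we obtain $\mathrm{OBJ}(Y) = \sum_a \sum_S p(a)\,(Q - g(S,a))\, Y_{S,a} \le \mathrm{OPT}$.

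The heart of the argument is the term-by-term comparison between $\mathrm{ADG}$ and $\mathrm{OBJ}(Y)$. As established in the proof of Lemma~\ref{mainlemma}, ADG assigns a nonzero value to $Y_{S,a}$ only when $S$ is a prefix of the sequence $C(a)$; moreover such an $S$ is a \emph{proper} prefix, assigned as $y_{F^{l-1}}$ during an iteration in which ADG has not yet terminated, so $g(S,a) < Q$ and hence $Q - g(S,a) > 0$. For each such pair $(a,S)$ we therefore have $S \in \mathrm{Pref}(C(a))$ with nonzero denominator, so by the definition of $\alpha$, $\sum_{j \in C(a)} g_{S,a}(j) \le \alpha\,(Q - g(S,a))$. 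Multiplying by the nonnegative factor $p(a)\, Y_{S,a}$ and summing over all $(a,S)$ with $Y_{S,a} \ne 0$ yields $\mathrm{ADG} \le \alpha \sum_a \sum_S p(a)\,(Q - g(S,a))\, Y_{S,a} = \alpha\, \mathrm{OBJ}(Y) \le \alpha\, \mathrm{OPT}$, which is the claim.

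The main obstacle, and essentially the only place requiring genuine care, is verifying that every nonzero $Y_{S,a}$ corresponds to an $S$ that is a proper prefix of $C(a)$ with $g(S,a) < Q$. This is what guarantees that the denominators in the definition of $\alpha$ are strictly positive exactly where they are needed, so that the ratio bound applies to every surviving term; it follows from the bookkeeping of ADG, since each $y_{F^{l-1}}$ is set before the loop-termination test $g(b) = Q$ succeeds. Everything else is weak duality and a nonnegatively weighted sum of the defining inequalities for $\alpha$.
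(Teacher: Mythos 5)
Your proof is correct and follows essentially the same route as the paper: rewrite the expected ADG cost via Lemma~\ref{lem:tightconstraints}, compare coefficients of $Y_{S,a}$ term-by-term against the LP2 objective using the definition of $\alpha$, and conclude via the feasibility of $Y$ (Lemma~\ref{mainlemma}), weak duality, and Lemma~\ref{lowerbound}. Your explicit observation that every nonzero $Y_{S,a}$ has $S$ a proper prefix of $C(a)$ with $Q - g(S,a) > 0$ is a welcome detail that the paper leaves implicit.
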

%\vspace{-1.5em}
\begin{proof}
By Lemma~\ref{lem:tightconstraints}, the expected cost of the cover
constructed by ADG
is $\sum\nolimits_{a} \sum\nolimits_{S} \sum\nolimits_{j:j \in C(a)} p(a) g_{S,a}(j) Y_{S,a}$.
The value of the objective function of LP2 on $Y$ is
$\sum_{a} \sum_{S} p(a)~(Q-g(S,a))~Y_{S,a}$.
For any $a$, $Y_{S,a}$ is non-zero if $S \in \mbox{Pref}(C(a))$.
Comparing the coefficients of $Y_{S,a}$
in these two expressions
implies that the value of the objective function on $Y$
is at most 
$\max  \frac{ \sum_{j \in C(a)} g_{S,a}(j) } { Q - g(S,a)}$
times the expected cost of the cover.
The theorem follows by Lemma~\ref{lowerbound} and
weak duality.
\end{proof}

% \section{A 3-approximation for linear threshold formula evaluation}
% \label{sec:3approx}

\begin{theorem} 
\label{thm:thresholdDual}
There is a polynomial-time 3-approximation algorithm solving SBFE problem for linear threshold formulas with
integer coefficients.
\end{theorem}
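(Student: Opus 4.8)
The plan is to reuse the assignment-feasible utility function $g$ constructed in Section~\ref{sec:thresholdQ}, where $g(b) = Q - (Q_1 - g_1(b))(Q_0 - g_0(b))$ with $Q = Q_0 Q_1$, but to run Adaptive Dual Greedy on the resulting SSSC instance instead of Adaptive Greedy. By Theorem~\ref{thm:dualg} it then suffices to bound by $3$ the ratio $\alpha = \max \frac{\sum_{j \in C(a)} g_{S,a}(j)}{Q - g(S,a)}$, taken over all inputs $a$ and prefixes $S \in \mbox{Pref}(C(a))$ with nonzero denominator. So the whole argument reduces to a single structural estimate on this ratio for this particular $g$; everything else (building $g$, evaluating $min$, $max$ and the greedy quantities in polynomial time, and the poly-many iterations of ADG) is already in place, which also delivers the ``polynomial-time'' part of the claim.

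First I would rewrite both quantities in terms of the ``slack'' functions $u_1(b) = Q_1 - g_1(b)$ and $u_0(b) = Q_0 - g_0(b)$. Using the closed forms of $g_1, g_0$ one checks that $u_1(b) = \max\{0, -min(b)\}$ and $u_0(b) = \max\{0, max(b)+1\}$, so with $b = a(S)$ the denominator equals exactly $u_0(b)\,u_1(b)$, and since $b$ is a \emph{proper} prefix both factors are positive. For the numerator, writing $u_0', u_1'$ for the slacks after fixing $x_j$ to its value under $a$, the identity $g_{S,a}(j) = u_0 u_1 - u_0' u_1' = u_0\,\delta_1^{(j)} + u_1\,\delta_0^{(j)} - \delta_0^{(j)}\delta_1^{(j)} \le u_0\,\delta_1^{(j)} + u_1\,\delta_0^{(j)}$, with $\delta_1^{(j)} = u_1 - u_1'$ and $\delta_0^{(j)} = u_0 - u_0'$, reduces everything to bounding $\sum_j \delta_1^{(j)}$ and $\sum_j \delta_0^{(j)}$ over the suffix $C(a)\setminus S$ (the terms with $j \in S$ vanish).

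The crucial observation I would exploit is that $min$ and $max$ are \emph{additive} (separable) in the tests: the increment $\Delta min_j$ of $min$, and the decrement $\Delta max_j$ of $max$, caused by fixing $x_j$ to its value under $a$ depend only on $j$ and that value, not on the rest of $b$. Hence $\delta_1^{(j)} = \min\{u_1(b), \Delta min_j\}$ and $\delta_0^{(j)} = \min\{u_0(b), \Delta max_j\}$, and I may sum increments along the actual ADG insertion order $j_1, \ldots, j_m$ of the suffix. Since every proper prefix of $C(a)$ is a non-cover, the cover is completed only by the last test $j_m$; assume without loss of generality, by the symmetry of the two sides, that $j_m$ completes a $1$-certificate. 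Then the cumulative increase of $min$ over $j_1, \ldots, j_{m-1}$ stays below $u_1(b)$ (that prefix is not yet a $1$-certificate), so $\sum_j \delta_1^{(j)} < u_1(b) + u_1(b) = 2u_1(b)$, the extra $u_1(b)$ absorbing the last test. Meanwhile, completing a $1$-certificate forces $u_0$ to stay positive throughout, so $\sum_j \delta_0^{(j)} \le \sum_j \Delta max_j = max(b) - max(\mathrm{end}) < u_0(b)$. Substituting, the numerator is at most $u_0(b)\cdot 2u_1(b) + u_1(b)\cdot u_0(b) = 3\,u_0(b)u_1(b)$, i.e.\ $\alpha \le 3$, which with Theorem~\ref{thm:dualg} gives the $3$-approximation.

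The step I expect to be the main obstacle is the bound $\sum_j \delta_1^{(j)} < 2u_1(b)$. The marginals $\delta_1^{(j)}$ are all measured from the \emph{same} fixed prefix $S$ rather than telescoped along the insertion order, so a priori many suffix tests could each contribute nearly $u_1(b)$, and submodularity alone pushes in the wrong direction. The resolution — and the place where the linear-threshold structure is essential — is precisely the additivity of $min$, which converts this sum of marginals into a single cumulative increment and isolates the one ``overshooting'' test (the last one), yielding the factor $2$ on the completing side against $1$ on the other. As a sanity check, in the Min-Knapsack specialization all test outcomes are fixed to $1$, the $0$-certificate side never arises, the $\sum_j \delta_0^{(j)}$ term vanishes, and the identical computation gives the sharper $\alpha \le 2$.
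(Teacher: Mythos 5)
Your proposal is correct and follows essentially the same route as the paper: reduce to ADG via the utility function of Section~\ref{sec:thresholdQ}, then bound $\alpha\le 3$ by showing the certifying side of the product contributes less than $2\,u_0u_1$ (all tests but the last stay strictly below the threshold, and the last contributes at most the full slack) while the other side contributes less than $u_0u_1$. Your bookkeeping differs only cosmetically — you use the product-rule decomposition $u_0\delta_1+u_1\delta_0-\delta_0\delta_1$ and additivity of $min$/$max$, and handle a general prefix $S$ directly, where the paper partitions the tests into $C_0$/$C_1$ and reduces $F\neq\emptyset$ to the induced subproblem — but the key estimate and even the Min-Knapsack $\alpha\le 2$ observation match the paper.
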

%\vspace{-1.5em}
\begin{proof}
We modify the linear threshold evaluation algorithm from Section
~\ref{sec:thresholdQ}, substituting
ADG for Adaptive Greedy.
By Theorem~\ref{thm:dualg}, the resulting algorithm is within
a factor of $\alpha$ of optimal.
We now show that $\alpha \leq 3$ in this case.

Fix $x$ and consider the run of ADG on $x$.
Let $T$ be the number of loop iterations.
So $C(x) = j_1, \ldots, j_T$ is the sequence of tested items,
and $F^t = \{j_1, \ldots, j_t\}$.
Assume first that $f(x)=1$. 
Let $F = F^0 = \emptyset$, and consider the ratio
$\frac{ \sum_{j \in C(x)} g_{F,x}(j) } { Q - g(F,x)}$.

We use the definitions and utility functions
from the algorithm in Section~\ref{sec:thresholdQ}.
Assume without loss of generality that neither $g_0$ nor $g_1$
is identically 0.

Let $A = -R_{min}$
and let $B = R_{max} + 1$.
Thus $Q - g(\emptyset,x) = AB$.
Let $C_1$ be the set of items $j_l$ in $C(x)$ such that 
either $x_{j_l} = 1$
and $a_{j_l} \geq 0$ or
$x_{j_l} = 0$
and $a_{j_l} < 0$.
Similarly, let $C_0$ be the set of items $j_l$ in $C(x)$,
such that 
either $x_{j_l} = 0$
and $a_{j_l} \geq 0$ or
$x_{j_l} = 1$
and $a_{j_l} < 0$.

Testing stops as soon as the goal utility is reached. Since
$f(x)=1$, this means
testing on $x$ stops 
when $b$ satisfies $g_1(b) = Q_1$, or equivalently, 
$b$ is a 1-certificate of $f$. 
Thus the last tested item, $j_T$, is in $C_1$.
Further, the sum of the $a_{j_l} x_{j_l}$
over all $j_l \in C_1(x)$, excluding $j_T$,
is less than $-R_{min}$, while the sum including
$j_T$ is greater than or equal to $-R_{min}$.
By the definition of utility function $g$, 
$\sum_{j_l \in C_1:j_l \neq j_T} g_{\emptyset,x}(j_l) < AB$.
The maximum possible value of $g_{\emptyset,x}(j_T)$ is $AB$.
Therefore,
$\sum_{j_l \in C_1} g_{\emptyset,x}(j_l) < 2AB$.

Since $x$ does not contain both a 0-certificate and
a 1-certificate of $f$,
the sum of the $a_{j_l} x_{j_l}$ over all
$j_l \in C_0(x)$ is strictly less than $R_{max}$.
Thus by the definition of $g$,
$\sum_{j_l \in C_0} g_{\emptyset,x}(j_l) < AB$.
Summing over all $j_l \in C(x)$, we get that
$\sum_{j_l \in C(x)} g_{\emptyset,x}(j_l) < 3AB$.
Therefore,
$\frac{ \sum_{j \in C(x)} g_{F,x}(j) } { Q - g(F,x)}$ $< 3$,
because for $F = \emptyset$,
$Q = AB$ and $g(\emptyset,x) = 0$.
A symmetric argument holds when $f(x) = 0$.

It remains to show that the same bound holds when
$F \neq \emptyset$.
We reduce this to the case $F = \emptyset$. 
Once we have tested the variables in $F^t$,
we have an induced linear threshold evaluation problem
on the remaining variables (replacing the tested variables by their values).  
Let
$g'$ and $Q'$ be the utility function and goal value
for the induced problem, as constructed in
the algorithm of Section~\ref{sec:thresholdQ}.
The ratio
$\frac{ \sum_{j \in C(x)} g_{F,x}(j) } { Q - g(F,x)}$ 
is equal to
$\frac{ \sum_{j \in C(x) - F} g'_{\emptyset,x'}(j) } { Q' - g'(\emptyset,x')}$ ,
where $x'$ is $x$ restricted to the elements not in $F$.
By the argument above, this ratio is bounded by 3.
\end{proof}

% a recent algorithm by Carr et al., based on rounding a relaxed LP solution~\cite{carretalMinKnapsack}.
% There are also combinatorial algorithms achieving a 2-approximation,
% and the problem has a PTAS~\cite{}
% ,csiriketalMinKnapsack}.
% a combinatorial algorithm, due to Csirik et al.,
\section{A new bound for Adaptive Greedy}
\label{append:alternative}

We give a new analysis of the Adaptive Greedy algorithm of Golovin and Krause, 
whose pseudocode we presented in Algorithm 1.
Throughout
this section, we let $g(j) = \max_{l \in \{0,1\}}g_r(j,l)$
where $r = (*, \ldots, *)$. Thus $g(j)$ is the maximum increase in
utility that can be obtained as a result of testing $j$ (since $g$ is
submodular).
We show that the expected cost of the solution computed by
Adaptive Greedy is
within a factor of 
$2(\ln (\max_{i \in N}{g(i)} + 1))$ 
of optimal in the binary case.

In the $k$-ary case, the $2$ in the bound is replaced by $k$.
Note that
$\max_j{g(j)}$ is clearly upper bounded by $Q$, 
and in some instances may be much less than $Q$.
However,
because of the factor of $k$ at the
front of our bound, we cannot say that it is strictly better
than the $(\ln Q + 1)$ bound of Golovin and Krause.
(The bound that is analogous to ours in the non-adaptive case,
proved by Wolsey,
does not have a factor of 2.)

Adaptive Greedy is a natural extension of the Greedy algorithm
for (deterministic) submodular set cover of Wolsey.
We will extend Wolsey's analysis \cite{wolsey}, 
as it was presented by Fujito \cite{fujitoSurvey}.
In our analysis,
we will refer to LP2 defined in Section~\ref{sec:adg},
along with the associated notation for the constraints $h'_{a^j}(y) \leq c_j$.

For $x \in \{0,1\}^n$,
let $T^x$ be the number of iterations of the Adaptive Greedy while loop on input $x$. 
Let $b_x^t$ denote the value of $b$ at the end
of iteration $t$ of the while loop on input $x$,
and let $F^t_x$ denote the value of $F^t$, where $F^t$ is the set of $j$s tested by the end of the $t+1$st iteration. (The $x$ in the notation may be dropped when it is understood implicitly.)  Set $\theta_x^t = \min_{j \notin F^{t-1}} \frac{c_j}{E[g_{b^{t-1}}(j)]}$.

For $j \in N$, 
let $k_j$ be the value of $t$ that maximizes 
($\theta_x^t)(g_{b_x^t}(j,1))$. Similarly, let $l_j$ be the value of $t$ that 
maximizes $(\theta_{x'}^t)(g_{b_{x'}^{t-1}}(j,0))$, where $x'$
is the assignment obtained from $x$ by complementing $x_j$. 
Again, let $r$ denote the assignment $\{*, \ldots, *\}$,
and let
$H_j^1 = H(g_r(j,1))$ and $H_j^0 = H(g_r(j,0))$, where
$H(n)$ denotes the $n$th harmonic number, which is at most $(\ln n + 1)$. 
Let $q_j = 1 - p_j$.

To analyze Adaptive Greedy, we define 
$Y$ to be the assignment to the LP2 variables $y_{S,x}$ 
setting $y_{F^0,x} = \theta_x^1$, $y_{F^t,x} = (\theta_x^{t+1} - \theta_x^t)$ for $t \in \{1 \ldots T^x-1\}$ and $y_{S,x} = 0$ for all other $S$.
We define $Y^x$ to be the restriction of that assignment to variables $y_{S,x}$
for that $x$. Let $q^x(y) = \sum_{S \subset N} g_{S,x}(N-S) y_S^x$.

\begin{lemma}\label{fujitocostgreedy}
The expected cost of the cover
constructed by Adaptive Greedy is at most $E[q^x(Y^x)]$, where the expectation is with respect to $x \sim D_p$.
\end{lemma}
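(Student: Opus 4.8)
The plan is to first reduce $q^x(Y^x)$ to a transparent summation-by-parts form for each fixed input $x$, and then average over $x \sim D_p$ using the product structure of the distribution together with the Neighbor Property.

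First I would fix $x$ and substitute the definition of $Y^x$ into $q^x(Y^x) = \sum_{S \subset N} g_{S,x}(N-S)\,y^x_S$. Only the sets $S = F^t_x$ for $t = 0, \ldots, T^x-1$ carry nonzero weight, and $g_{F^t_x,x}(N-F^t_x) = Q - g(F^t_x,x)$ is just the residual utility after the first $t$ tests. Writing $\theta^0_x = 0$ so that $y^x_{F^t_x} = \theta^{t+1}_x - \theta^t_x$ uniformly for $0 \le t \le T^x - 1$, I would compute
\[
q^x(Y^x) = \sum_{t=0}^{T^x-1}\bigl(Q - g(F^t_x,x)\bigr)\bigl(\theta^{t+1}_x - \theta^t_x\bigr)
\]
and perform an Abel rearrangement. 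Using the boundary values $g(F^0_x,x) = 0$ and $g(F^{T^x}_x,x) = Q$, the residual-utility differences telescope into the realized marginal utility of each test, yielding the clean identity
\[
q^x(Y^x) = \sum_{t=1}^{T^x} \theta^t_x\,\bigl(g(F^t_x,x) - g(F^{t-1}_x,x)\bigr).
\]
Since $F^t_x = F^{t-1}_x \cup \{j_t\}$ and $b_x^{t-1}$ agrees with $x$ on $F^{t-1}_x$, the $t$-th increment is exactly $g_{b_x^{t-1}}(j_t, x_{j_t})$. This is the main bookkeeping step and is purely algebraic.

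Next I would take the expectation over $x \sim D_p$, and this is the step I expect to be the crux: reconciling the \emph{realized} marginal $g_{b_x^{t-1}}(j_t, x_{j_t})$ with the \emph{expected} marginal $E[g_{b_x^{t-1}}(j_t)]$ that the algorithm used to define $\theta^t_x$. I would organize $E\bigl[\sum_{t=1}^{T^x}\theta^t_x\,g_{b_x^{t-1}}(j_t, x_{j_t})\bigr]$ as a sum over the internal nodes of the decision tree realized by Adaptive Greedy. At an internal node reached by partial assignment $b$ (so $g(b) < Q$), with selected test $j$, both $\theta = c_j/E[g_b(j)]$ and $j$ depend on $x$ only through the bits already observed in $dom(b)$; by the Neighbor Property the two inputs extending $b$ with $x_j = 1$ and with $x_j = 0$ both pass through this node. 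Because $D_p$ is a product distribution, conditioning on reaching the node makes $x_j$ a fresh Bernoulli$(p_j)$ outcome, so averaging the realized marginal over the two children reproduces exactly $p_j\,g_b(j,1) + (1-p_j)\,g_b(j,0) = E[g_b(j)]$. The node's total contribution is therefore $p(b)\,\theta\,E[g_b(j)] = p(b)\,c_j$; here I use that whenever a test is actually performed the expected marginal $E[g_b(j)]$ is strictly positive (otherwise, by monotonicity and submodularity, $g_{b'}(j,l) = 0$ for every extension $b'$ of $b$, so no extension of $b$ could ever reach $Q$), which rules out the degenerate convention in Step~\ref{mainstepa}.

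Finally I would sum over all internal nodes. The total $\sum_{b} p(b)\,c_{j}$ is exactly the expected cost of the cover constructed by Adaptive Greedy, since each test contributes its cost weighted by the probability of reaching the node at which it is performed. Hence $E[q^x(Y^x)]$ equals the expected cost, which in particular establishes the claimed bound that the expected cost is at most $E[q^x(Y^x)]$.
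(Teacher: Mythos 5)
Your proof is correct and follows essentially the same route as the paper: the paper's one-line proof simply defers to the Abel-summation identity in Fujito's analysis of non-adaptive greedy (Theorem 1 of the cited survey) and invokes linearity of expectation, which is exactly the argument you write out in full. In fact you establish the stronger statement that $E[q^x(Y^x)]$ \emph{equals} the expected cost, and your node-by-node treatment of the expectation---converting realized marginals $g_{b}(j,x_j)$ into expected marginals $E[g_b(j)]$ via the Neighbor Property and the product structure of $D_p$---is precisely the content hidden in the paper's phrase ``by linearity of expectation.''
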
 

\begin{proof} 
By the definition of $Y^x$, the proof follows directly from the analysis
of (non-adaptive) Greedy in Theorem 1 of \cite{fujitoSurvey},
by linearity of expectation. \end{proof}

We need to bound the value of $h'_w(Y)$ for each $w \in W$.
We will use the following lemma from Wolsey's analysis.

\begin{lemma}
\label{fujitoseries}
\cite{wolsey} Given two sequences $(\alpha^{(t)})_{t=1}^T$ and $(\beta^{(t)})_{t=0}^{T-1}$, such that both are nonnegative, the former is  monotonically nondecreasing and the latter, monotonically non-increasing, and $\beta^{(t)}$ is a nonnegative integer for any value of $t$, then
\[\alpha^{(1)}\beta^{(0)} + (\alpha^{(2)} - \alpha^{(1)})\beta^{(1)} + \ldots + (\alpha^{(T)} - \alpha^{(T-1)})\beta^{(T-1)}\]
\[\leq (\max_{1 \leq t \leq T} \alpha^{(t)}\beta^{(t-1)}H(\beta^{(0)})). \]
\end{lemma}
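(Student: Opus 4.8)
The plan is to prove this purely as a statement about the two sequences; no property of Adaptive Greedy is needed here. Write $M = \max_{1 \le t \le T}\alpha^{(t)}\beta^{(t-1)}$, adopt the convention $\alpha^{(0)} = 0$, and let $S$ denote the left-hand sum $\sum_{t=1}^{T}(\alpha^{(t)} - \alpha^{(t-1)})\beta^{(t-1)}$. The goal is to show $S \le M\,H(\beta^{(0)})$.

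First I would apply summation by parts (Abel summation) to trade the differences of the $\alpha$'s for differences of the $\beta$'s, since it is the $\beta$ sequence that is integer-valued and monotone and hence the one we can exploit. A direct rearrangement gives
\[ S = \alpha^{(T)}\beta^{(T-1)} + \sum_{t=1}^{T-1}\alpha^{(t)}\bigl(\beta^{(t-1)} - \beta^{(t)}\bigr). \]
Each summand is now $\alpha^{(t)}$ times a nonnegative integer gap $\beta^{(t-1)} - \beta^{(t)}$, and the leading term satisfies $\alpha^{(T)}\beta^{(T-1)} \le M$.

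Next I would bound the summand using the defining inequality. Whenever $\beta^{(t-1)} > 0$, $\alpha^{(t)}\beta^{(t-1)} \le M$ yields $\alpha^{(t)} \le M/\beta^{(t-1)}$, so
\[ \alpha^{(t)}\bigl(\beta^{(t-1)} - \beta^{(t)}\bigr) \le M\,\frac{\beta^{(t-1)} - \beta^{(t)}}{\beta^{(t-1)}} = M\sum_{k=\beta^{(t)}+1}^{\beta^{(t-1)}}\frac{1}{\beta^{(t-1)}} \le M\sum_{k=\beta^{(t)}+1}^{\beta^{(t-1)}}\frac{1}{k}, \]
where the middle equality uses that the $\beta$'s are integers, so the gap counts exactly $\beta^{(t-1)} - \beta^{(t)}$ unit fractions, and the last step uses $1/\beta^{(t-1)} \le 1/k$ for $k \le \beta^{(t-1)}$. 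Terms with $\beta^{(t-1)} = 0$ have zero gap and contribute nothing, so the division is never problematic. Because the $\beta$'s are non-increasing integers, the index ranges $(\beta^{(t)}, \beta^{(t-1)}]$ are disjoint and tile $(\beta^{(T-1)}, \beta^{(0)}]$, so summing over $t$ telescopes to
\[ \sum_{t=1}^{T-1}\alpha^{(t)}\bigl(\beta^{(t-1)} - \beta^{(t)}\bigr) \le M\sum_{k=\beta^{(T-1)}+1}^{\beta^{(0)}}\frac{1}{k} = M\bigl(H(\beta^{(0)}) - H(\beta^{(T-1)})\bigr). \]
Finally I would reattach the leading term with a case split on the last gap value: if $\beta^{(T-1)} \ge 1$ then $H(\beta^{(T-1)}) \ge 1$ and $\alpha^{(T)}\beta^{(T-1)} \le M$, so $S \le M + M(H(\beta^{(0)}) - H(\beta^{(T-1)})) \le M\,H(\beta^{(0)})$; and if $\beta^{(T-1)} = 0$ the leading term vanishes while $H(\beta^{(T-1)}) = 0$, again giving $S \le M\,H(\beta^{(0)})$.

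I expect the only delicate point to be the bookkeeping around the leading boundary term $\alpha^{(T)}\beta^{(T-1)}$: it is precisely the quantity that the extra ``$+1$'' in the harmonic-number bound must absorb, and tightness forces the case split on whether $\beta^{(T-1)}$ is zero or at least one. Everything else is routine and relies only on the monotonicity and integrality hypotheses already assumed in the statement.
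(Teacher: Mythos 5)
Your proof is correct. The paper itself gives no proof of this lemma---it is cited directly from Wolsey's analysis of greedy submodular set cover---but your argument (Abel summation to shift the differences onto the integer-valued, non-increasing $\beta$ sequence, bounding each gap by a block of unit fractions that telescopes to $H(\beta^{(0)})-H(\beta^{(T-1)})$, and absorbing the boundary term $\alpha^{(T)}\beta^{(T-1)}\le M$ via the case split on $\beta^{(T-1)}$) is exactly the standard proof of this result, and every step, including the handling of the $\beta^{(t-1)}=0$ terms, checks out.
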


\begin{lemma}\label{lem:expectedcostys}
For every $x \in \{0,1\}^n$ and $j \in \{1, \ldots, N\}$, $h'_{x^j}(Y) \leq  c_j2H(\max_{i \in N} g(i))$.
\end{lemma}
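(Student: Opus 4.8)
The plan is to split the left-hand side $h'_{x^j}(Y)$ of the LP2 constraint indexed by $w = x^j$ into its two summands, bound each one separately by Wolsey's series lemma (Lemma~\ref{fujitoseries}), and then add the two bounds; the factor of $2$ in the statement will come precisely from treating the two halves independently. Writing $j = j(x^j)$, recall that $w^{(0)}$ and $w^{(1)}$ agree with $x$ except on bit $j$, where they equal $0$ and $1$. By the definition of $h'_w$, with $q_j = 1-p_j$, we have $h'_{x^j}(Y) = q_j \sum_{S} g_{S,w^{(0)}}(j) Y_{S,w^{(0)}} + p_j \sum_{S} g_{S,w^{(1)}}(j) Y_{S,w^{(1)}}$.

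First I would fix an arbitrary input $x' \in \{0,1\}^n$ and analyze $\sum_S g_{S,x'}(j) Y_{S,x'}$. Since $Y_{S,x'}$ is nonzero only on the prefixes $S = F^t_{x'}$, with $Y_{F^0,x'} = \theta_{x'}^1$ and $Y_{F^t,x'} = \theta_{x'}^{t+1}-\theta_{x'}^t$, this sum equals $\theta_{x'}^1 g_{F^0,x'}(j) + \sum_{t=1}^{T^{x'}-1}(\theta_{x'}^{t+1}-\theta_{x'}^t)\, g_{F^t,x'}(j)$, which is exactly the left-hand side of Lemma~\ref{fujitoseries} with $\alpha^{(t)} = \theta_{x'}^t$ and $\beta^{(t)} = g_{F^t,x'}(j)$. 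The hypotheses hold: $g_{F^t,x'}(j)$ is a nonnegative integer that is non-increasing in $t$ by submodularity, and $\theta_{x'}^t$ is nondecreasing, because testing more items can only decrease expected marginal utilities and hence can only raise the minimum greedy ratio. Lemma~\ref{fujitoseries} then gives $\sum_S g_{S,x'}(j) Y_{S,x'} \leq H(g_r(j,x'_j))\, \max_{1 \le t \le T^{x'}} \theta_{x'}^t\, g_{F^{t-1},x'}(j)$, where I would rewrite $g_{F^{t-1},x'}(j) = g_{b^{t-1}_{x'}}(j,x'_j)$.

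The key step is converting $\theta^t\, g_{b^{t-1}}(j,x'_j)$ into a cost bound. Whenever this quantity is nonzero, $j \notin F^{t-1}$, so the greedy selection rule forces $\theta_{x'}^t E[g_{b^{t-1}}(j)] \le c_j$. Using $E[g_{b^{t-1}}(j)] = p_j g_{b^{t-1}}(j,1) + q_j g_{b^{t-1}}(j,0)$ and monotonicity of $g$, I would drop one nonnegative term to obtain $p_j \theta_{x'}^t g_{b^{t-1}}(j,1) \le c_j$ on input $w^{(1)}$, and $q_j \theta_{x'}^t g_{b^{t-1}}(j,0) \le c_j$ on input $w^{(0)}$. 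Applying the previous paragraph with $x' = w^{(1)}$ (so $x'_j = 1$) and multiplying through by $p_j$ yields $p_j \sum_S g_{S,w^{(1)}}(j) Y_{S,w^{(1)}} \le c_j H(g_r(j,1))$, and symmetrically $q_j \sum_S g_{S,w^{(0)}}(j) Y_{S,w^{(0)}} \le c_j H(g_r(j,0))$. Since $g_r(j,1), g_r(j,0) \le g(j) \le \max_{i \in N} g(i)$ and $H$ is increasing, each half is at most $c_j H(\max_{i\in N} g(i))$, and summing the two halves gives $h'_{x^j}(Y) \le 2 c_j H(\max_{i \in N} g(i))$, as required.

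I expect the main obstacle to be the careful bookkeeping around the greedy inequality: verifying the two monotonicity hypotheses of Lemma~\ref{fujitoseries} (especially that the thresholds $\theta_{x'}^t$ are nondecreasing), and checking that the terms where $j$ has already been tested contribute $0$, so that $j \notin F^{t-1}$ holds exactly on the indices $t$ that can attain the maximum and the greedy bound is available there.
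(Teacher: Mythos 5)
Your proof is correct and follows essentially the same route as the paper's: split $h'_{x^j}(Y)$ into the $w^{(0)}$ and $w^{(1)}$ halves, apply Wolsey's series lemma (Lemma~\ref{fujitoseries}) to each with $\alpha^{(t)}=\theta^t$ and $\beta^{(t)}=g_{F^t,\cdot}(j)$, and use the greedy ratio bound $\theta^t E[g_{b^{t-1}}(j)]\le c_j$ together with $p_j g_{b^{t-1}}(j,1)\le E[g_{b^{t-1}}(j)]$ to convert each maximum into $c_j H(\cdot)$. The only (cosmetic) differences are that you work directly from the definition of $h'_w$ rather than routing through the Neighbor Property, and you drop a nonnegative term from the expectation where the paper adds extra nonnegative terms to complete one.
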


\begin{proof} 
By the submodularity of $g$, and the greedy choice criterion used
by Adaptive Greedy,
$\theta_x^1 \leq \theta_x^2 \ldots \leq \theta_x^{T^x}$.
By the submodularity of $g$,
$g_{b_{x}^{0}}(j,0) \geq g_{b_{x}^1}(j,0) \ldots \geq g_{b_{x}^{T^x}}(j,0)$.
Thus Lemma~\ref{fujitoseries}
applies to the non-decreasing sequence
$\theta_x^1, \theta_x^2, \ldots \theta_x^{T^x}$
and the
non-increasing sequence $g_{b_x}^0(j,0), \ldots, g_{b_{x}}^1(j,0), \ldots, g_{b_x}^{T^x-1}(j,0)$.
This also holds if we substitute $(j,1)$ for $(j,0)$ in
the second sequence.

Let $x'$ be the assignment differing from $x$ only in bit $j$.
In the following displayed
equations, we write $k$ and $l$ in place of $k_j$ and $l_j$
to simplify the notation.

\begin{flalign*}
&h'_{x^j}(Y)\\
&= \sum_{S \subseteq N} (p_j g_{S,x}(j) + (1-p_j) g_{S,x'}(j))Y_{S,x} \\
& \hspace{10em}\text {by the Neighbor Property, by the same argument as used in the analysis
of ADG}\\
& = p_j [\theta_x^1g_{b_x^{0}}(j,1) + \Sigma_{i=2}^{T^x} (\theta_x^i - \theta_x^{i-1}) g_{b_x^{i-1}}(j,1)]+q_j [\theta_{x'}^1g_{b_{x'}^{0}}(j,0)+ \Sigma_{i=2}^{T^{x'}} (\theta_{x'}^i - \theta_{x'}^{i-1}) g_{b_{x'}^{i-1}}(j,0)]\\
&\leq p_j [\theta_x^k g_{b_x^{k-1}}(j,1)H_j^1] + q_j [{\theta_{x'}}^{l}g_{b_{x'}^{{l}-1}}(j,0)H_j^0] \hspace{9.5em}\text{  by Lemma \ref{fujitoseries} as indicated above}\\
&\leq p_j [\theta_x^k g_{b_x^{k-1}}(j,1)H_j^1] + q_j [\theta_x^k g_{b_x^{k-1}}(j,0)H_j^0] 
+ p_j [\theta_{x'}^{l} g_{b_{x'}^{l-1}}(j,1)H_j^1] + q_j [\theta_{x'}^{l}g_{b_{x'}^{{l}-1}}(j,0)H_j^0]\\
&\hspace{23.5em}\text{ since this just adds extra non-negative terms}\\
&=\theta_x^kH_j^1[p_jg_{b_x^{k-1}}(j,1) + q_jg_{b_x^{k-1}}(j,0)] 
 + \theta_{x'}^{l}H_j^0[q_jg_{b_{x'}^{l-1}}(j,0) + p_jg_{b_{x'}^{l-1}}(j,1)]\\
&\leq c_jH_j^1 + c_jH_j^0 \hspace{15em}\text{\;due to the greedy choices made by Algorithm \ref{alg:greedy}} \\
&\leq c_j 2H(g(j)))\\
&\leq c_j  2H(\max_i g(i)))\end{flalign*}\end{proof}

\begin{theorem}
\label{thm:greedy}
Given an instance of SSSC with utility function $g$,
Adaptive Greedy constructs a decision tree whose expected cost is
no more than 
a factor of $2(\max_{i \in N}(\ln g(i))+1)$ larger than the expected cost of the cover
produced by the optimal strategy.
\end{theorem}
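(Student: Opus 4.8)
The plan is to assemble the three preceding lemmas by means of LP duality; no new estimates are required once Lemma~\ref{lem:expectedcostys} is in hand. The first step is to recognize that the quantity $E[q^x(Y^x)]$ bounding the cost in Lemma~\ref{fujitocostgreedy} is precisely the objective function of LP2 evaluated at the assignment $Y$. Indeed, since $g(N,a)=Q$ for every $a\in\{0,1\}^n$, we have $g_{S,x}(N-S)=g(N,x)-g(S,x)=Q-g(S,x)$, so $q^x(Y^x)=\sum_{S}(Q-g(S,x))Y_{S,x}$, and taking the expectation over $x\sim D_p$ reproduces $\sum_a\sum_S p(a)\,(g(N,a)-g(S,a))\,Y_{S,a}$, which is exactly the LP2 objective at $Y$. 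Thus Lemma~\ref{fujitocostgreedy} says the expected cost of the Adaptive Greedy cover is at most the LP2 objective value at $Y$.

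Next I would convert $Y$ into a feasible LP2 solution using Lemma~\ref{lem:expectedcostys}. That lemma bounds the left-hand side of every LP2 constraint by $h'_{x^j}(Y)\le 2H(\max_{i\in N}g(i))\,c_j$. Hence the scaled assignment $\hat{Y}=Y/\big(2H(\max_{i\in N}g(i))\big)$ satisfies $h'_w(\hat{Y})\le c_{j(w)}$ for every $w\in W$ by linearity of $h'_w$, and since $\hat{Y}\ge 0$ it is feasible for LP2. By weak duality, the LP2 objective value at $\hat{Y}$ is at most the optimal value of LP1, which by Lemma~\ref{lowerbound} lower-bounds the expected cost $\mathrm{OPT}$ of the optimal strategy.

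Finally I would chain these facts. The LP2 objective is linear in the $y$-variables, so its value at $Y$ equals $2H(\max_{i\in N}g(i))$ times its value at $\hat{Y}$, and is therefore at most $2H(\max_{i\in N}g(i))\cdot\mathrm{OPT}$. Combining with the first step, the expected cost of Adaptive Greedy is at most $2H(\max_{i\in N}g(i))\cdot\mathrm{OPT}$, and using $H(m)\le \ln m+1$ together with the monotonicity of $\ln$ yields the stated factor $2(\max_{i\in N}\ln g(i)+1)$.

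The genuine difficulty is already isolated in Lemma~\ref{lem:expectedcostys}, where Wolsey's series inequality (Lemma~\ref{fujitoseries}) converts the telescoping greedy sums $\theta_x^1,\theta_x^2,\dots$ into a harmonic-number factor, and where the leading factor of $2$ arises because the binary outcome $x_j\in\{0,1\}$ forces a separate bound on the two terms $H_j^0$ and $H_j^1$ (replaced by $k$ in the $k$-ary case). Given that lemma, the theorem is a routine duality assembly, so I would expect the write-up to be short and to contain no further case analysis or calculation.
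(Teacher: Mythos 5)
Your proposal is correct and follows essentially the same route as the paper's proof: bound the Adaptive Greedy cost by the LP2 objective at $Y$ via Lemma~\ref{fujitocostgreedy}, scale $Y$ by $2H(\max_i g(i))$ to obtain a feasible LP2 solution using Lemma~\ref{lem:expectedcostys}, and conclude by weak duality together with Lemma~\ref{lowerbound}. Your explicit verification that $E[q^x(Y^x)]$ equals the LP2 objective (via $g(N,a)=Q$) simply spells out what the paper dismisses as ``easy to see.''
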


\begin{proof}
Let $OPT$ be the expected cost of the cover produced by the optimal strategy.
let $AGCOST$ be the expected cost of the cover produced by Adaptive Greedy,
and let $q(y)$ denote the objective function of LP2.
By Lemma~\ref{lowerbound}, the optimal value of LP1 is a lower
bound on OPT.
By Lemma \ref{lem:expectedcostys}, $Z = Y/(2H(\max_i g(i)))$ is
a feasible solution to LP2.
Thus by weak duality, $q(Z) \leq OPT$.
By Lemma~\ref{fujitocostgreedy},
$AGCOST \leq E[q^x(Y^x)]$, and it is easy to see
that 
$E[q^x(Y^x)] = q(Y)$. Since $q(Y) = q(Z)(2H(\max_i g(i)))$,
$AGCOST \leq OPT(2H(\max_i g(i)))$.
\end{proof}

\section{Simultaneous Evaluation and Ranking}
\label{sec:evalranking}

Let $f_1, \ldots, f_m$ be 
(representations of) Boolean functions from a class $C$, such
that each $f_i:\{0,1\}^n \rightarrow \{0,1\}$.
We consider the generalization of the SBFE
problem where instead of determining the value of a single
function $f$ on an input $x$, 
we need to determine the value of all $m$ functions
$f_i$ on the same input $x$.  
% We are again given cost vector $c$ and
% probability vector $p$, and our goal is to minimize the expected
%cost of evaluation.

The $Q$-value approach can be easily extended to this problem
by constructing utility functions for each of the $f_i$,
and combining them using the conjuctive construction in
Lemma~\ref{combinegoals}.  
The algorithm of 
Golovin and Krause for simultaneous evaluation of OR formulas
follows this approach~\cite{golovinKrause}
% They achieved an approximation bound for OR of $\ln mt + 1$
% where $t$ is the average number of variables in the $m$
% OR formulas.  
(Liu et al. presented a similar algorithm earlier,
using a different analysis~\cite{liuetal}.)
% We exploit the approach to get an algorithm for
% simultaneous evaluation of linear threshold formulas.
We can also modify the approach by calculating a bound based on $P$-value, or
using ADG instead of Adaptive Greedy.
We thus obtain the following theorem, where
$\sum_{i=1}^n a_{k_i} x_i \leq \theta_k$. 
is the $k$th threshold formula. 

\begin{theorem}
\label{thm:simthresh}
There is a polynomial-time algorithm for solving the simultaneous evaluation of
linear threshold formulas problem which produces a solution that is within a factor of $O(\log mD_{avg})$
of optimal where $D_{avg}$, is the average, over $k \in \{1,\ldots, m\}$,
of $\sum_{i=1}^n |a_{k_i}|$.
In the special case of OR formulas,
where each variable appears in at most $r$ of them, 
the algorithm achieves an approximation factor of
$2(\ln(\beta_{max} r)+1)$, where $\beta_{max}$ 
is the maximum number of variables in any of the OR formulas.

There is also a polynomial-time algorithm for solving the simultaneous evaluation of
threshold formulas problem which produces a solution that is within a factor of $D_{max}$ of 
optimal, where $D_{max} = \max_{k \in \{1,\ldots, m\}} \sum_{i=1}^n |a_{k_i}|$. 
\end{theorem}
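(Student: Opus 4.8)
The plan is to obtain all three bounds from a single reduction: for each threshold formula $f_k$ I would build the assignment-feasible utility function $g^{(k)}$ with goal value $Q_k = Q_1^{(k)}Q_0^{(k)}$ exactly as in Section~\ref{sec:thresholdQ} (rewriting $\sum_i a_{ki}x_i \le \theta_k$ as a $\ge$ constraint, which leaves $D_k = \sum_i |a_{ki}|$ unchanged), and then combine the $m$ of them with the \emph{conjunctive} construction of Lemma~\ref{combinegoals} to get $g_{\wedge} = \sum_k g^{(k)}$ with goal value $Q = \sum_k Q_k$. Since $g_{\wedge}(b) = Q$ iff $g^{(k)}(b) = Q_k$ for every $k$, i.e.\ iff $b$ determines every $f_k$, a cover of this SSSC instance is exactly a simultaneous-evaluation certificate, so any SSSC algorithm solves the problem and the three stated factors come from three different analyses of the cover it produces.

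For the $O(\log m D_{avg})$ bound I would run Adaptive Greedy and invoke its $(\ln Q + 1)$ guarantee, so the only work is bounding $Q$. Because $Q_1^{(k)} + Q_0^{(k)} = D_k + 1$, AM--GM gives $Q_k \le (D_k+1)^2/4$, hence $Q = \sum_k Q_k \le \frac14 \sum_k (D_k+1)^2 \le \frac14 \left(\sum_k (D_k+1)\right)^2 = \frac14 m^2 (1+D_{avg})^2$, where the middle step uses $\sum_k b_k^2 \le (\sum_k b_k)^2$ for nonnegative $b_k$. Thus $\ln Q + 1 = O(\log(m D_{avg}))$.

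For the OR special case I would instead invoke the new Adaptive Greedy bound of Theorem~\ref{thm:greedy}, namely $2(\ln \max_i g(i) + 1)$, where $g(i)$ is the largest utility a single test of $x_i$ can add. For one OR of $\beta_k$ variables the construction yields $Q_1^{(k)} = 1$ and $Q_0^{(k)} = \beta_k$, so testing a variable to $1$ adds at most $\beta_k \le \beta_{max}$ and testing it to $0$ adds $1$; since $x_i$ occurs in at most $r$ of the formulas, summing the per-formula marginals shows $g(i) \le r\,\beta_{max}$ in $g_{\wedge}$, giving the factor $2(\ln(\beta_{max} r)+1)$.

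The substantive part, and the step I expect to require the most care, is the $D_{max}$ bound, for which I would run ADG and bound its ratio $\alpha$ from Theorem~\ref{thm:dualg}. Fixing an input $x$ and a prefix $S \in \mbox{Pref}(C(x))$, the ratio is $\frac{\sum_k \sum_{j \in C(x)} g^{(k)}_{S,x}(j)}{\sum_k (Q_k - g^{(k)}(S,x))}$. Formulas already determined by $S$ contribute $0$ to both sums (monotonicity plus the cap at $Q_k$), so by the mediant inequality $\frac{\sum_k p_k}{\sum_k q_k} \le \max_{k:\,q_k>0} \frac{p_k}{q_k}$ it suffices to bound each per-formula ratio by $D_k \le D_{max}$. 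Writing $U = -min(b^S)$ and $V = max(b^S)+1$ for formula $k$, where $b$ agrees with $x$ on $S$, the product form gives $Q_k - g^{(k)}(S,x) = UV$, and at a non-cover state both are positive integers, so $U,V \ge 1$. Each tested item $j$ changes exactly one of $min, max$ by $|a_{kj}|$, determined solely by the sign of $a_{kj}$ and the bit $x_j$, so testing $j$ decreases either $U$ or $V$ by $\min\{\,\cdot\,,|a_{kj}|\}$; splitting $C(x)$ into $C^A$ (those decreasing $U$) and $C^B$ (those decreasing $V$) gives $\sum_j g^{(k)}_{S,x}(j) = V\sum_{C^A}\min\{U,|a_{kj}|\} + U\sum_{C^B}\min\{V,|a_{kj}|\}$, whence the per-formula ratio equals $\sum_{C^A}\frac{\min\{U,|a_{kj}|\}}{U} + \sum_{C^B}\frac{\min\{V,|a_{kj}|\}}{V} \le \sum_{C^A}|a_{kj}| + \sum_{C^B}|a_{kj}| \le D_k$, using $U,V \ge 1$ and the disjointness of $C^A, C^B$. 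The delicate point to verify carefully is the ``$U$ or $V$, by $\min\{\,\cdot\,,|a_{kj}|\}$'' claim across all four sign/bit cases, and that clamping the marginals at $0$ never invalidates it; once this is pinned down, Theorem~\ref{thm:dualg} yields $\alpha \le D_{max}$.
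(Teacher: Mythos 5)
Your proposal is correct and follows essentially the same route as the paper: the conjunctive combination $g=\sum_k g^{(k)}$ of the Section~\ref{sec:thresholdQ} utility functions, the $(\ln Q+1)$ bound of Adaptive Greedy for the $O(\log mD_{avg})$ factor, Theorem~\ref{thm:greedy} for the OR case, and the ADG ratio $\alpha$ of Theorem~\ref{thm:dualg} reduced to per-formula ratios (via the mediant inequality, which the paper leaves implicit) for the $D_{max}$ factor. Your explicit $U,V$ bookkeeping for the per-formula ratio is a cleaner rendering of the paper's terse induced-subproblem argument, but it is the same underlying calculation.
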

\begin{proof}
Let $g^{(1)}, \ldots, g^{(m)}$
be the $m$ utility functions that would be constructed if we ran
the algorithm from Section~\ref{sec:thresholdQ} separately on each 
of the $m$ threshold formulas that need to be evaluated.
Let $Q^{(1)}, \ldots, Q^{(m)}$ be the associated goal values.

Using the conjunctive construction from Lemma~\ref{combinegoals},
we construct utility function $g$
such that
$g(b) = \sum_{k=1}^m g^{(k)}(b)$,
and $Q = \sum_{k=1}^m Q^{(k)}$.

To obtain the first algorithm, we
evaluate all the threshold formulas by
running Adaptive Greedy with $g$, goal
value $Q$, and the given $p$, and $c$,
until it outputs a cover $b$.
Given cover $b$, 
it is easy to determine for each $f_k$ whether
$f_k(x)=1$ or $f_k(x) = 0$.

In the algorithm of Section~\ref{sec:thresholdQ}, for each $f_k$, 
the associated $Q_k = O(D_k)$, where $D_k$ is the sum of
the absolute values of the coefficients in $f_k$. 
Since $Q = \sum_k Q_k$, the $O(\log (mD_{avg}))$ bound follows
from the $(\ln Q + 1)$ bound for Adaptive Greedy. 

Suppose each threshold formula is an OR formula.
For $b \in \{0,1,*\}^n$,
$\max_{l \in \{0,1\}} g^{(k)}_b(i,l) = 0$ if $x_j$ does not appear in the $k$th OR formula,
otherwise it is equal to the number of variables in that formula.
The $2(\ln(\beta_{max} r)+1)$ approximation factor then
follows by our bound on Adaptive Greedy in Theorem~\ref{thm:greedy}.

For the second algorithm, we just use ADG instead of Adaptive Greedy
with the same utility function $g$.
By Theorem~\ref{thm:dualg}, the approximation
factor achieved by ADG is
$\max \frac{ \sum_j g_{S,x}(j) } { Q - g(S,x)}$. 

We bound this ratio for $g$.  Let $D_j = \sum_{i=1}^n |a_{j_i}|$.
Let $d \in \{0,1\}^n$ and $S \in F(x)$.
Without loss of generality, assume $S = \{n'+1, \ldots, n\}$.
In the $k$th threshold formula, for $i \geq n'$,
replace $x_i$ with $d_i$.
This induces a new threshold formula on $n-n'$ variables
with threshold 
$\theta_{k,d} = \theta_k-D_{k,d}$
whose coefficients sum to $D_{k,b} =
D_k - \sum_{i=n'}^n a_id_i$.
Let $b$ be the partial assignment such that $b_i = d_i$ 
for $i \geq n'$, and $b_i = *$ otherwise.
If $b$ contains either a 0-certificate or a 1-certificate for $f_k$, then
$Q_k - g^k(S,d) = 0$.

Otherwise,
$Q_k - g(S,d) = (\theta_{k,b})(D_{k,b}-\theta_{k,b}+1)$, and
$\sum_j g_{S,d}^k(j) \leq D_{k,b}\max\{\theta,D_{k,b}-\theta_{k,b}+1\}$. 
It follows that
$\frac{\sum_j g_{S,d}^k(j)}{Q_k - g^k(S,d)} \leq D_{k,b} \leq D_{max}$.

Since this holds for each $k$, 
$\max \frac{ \sum_j g_{S,d}(j) } {Q - g(S,d)} \leq D_{max}$.
\end{proof}

% Since the OR of 
% a subset of some number $\beta$ of the variables in $X_n$ is represented by a linear threshold formula 
% $\sum_{i=1}^n a_i x_i$ where each $a_i \in \{0,1\}$ and
% $\sum_{i=1}^n |a_i| = \beta$.  
% We thus have the following corollary.  As discussed in the introduction,
% Liu et al. previously obtained a $2\beta_{max}$ bound using
% a randomized algorithm.
% 
% \begin{corollary}  There is a polynomial-time algorithm for solving the simultaneous evaluation of
% OR formulas (shared filter ordering) problem, which produces a solution that is within a factor of $\beta_{max}$ of 
% optimal, where $\beta_{max}$ is the maximum number of variables in any of the OR formulas.  Another polynomial-time for this problem which produces a
% solution that is within a factor of $O(\ln m\beta_{avg})$ of optimal, where
% $\beta_{avg}$ is the average number of variables in each of the given OR formulas.
% \end{corollary}
For the special case of simultaneous evaluation of OR formulas,
the theorem implies a $\beta$-approximation algorithm,
where $\beta$ is 
the length of the largest OR formula.
This improves the $2\beta$-approximation achieved by the randomized
algorithm of Liu et al.~\cite{liuetal}.

We use a similar approach to solve the
{\em Linear Function Ranking} problem.
In this problem, you are given
a system of linear functions $f_1, \ldots, f_m$,
where for $j \in \{1, \ldots, m\}$, $f_j$  
is $a_{j_1} x_1 + a_{j_2} x_2 + \ldots a_{j_n} x_n$,
and the coefficients $a_{j_i}$ are integers.
You would like to determine 
the sorted order of the values $f_1(x), \ldots, f_m(x)$,
for an initially unknown $x \in \{0,1\}^n$.
(Note that the values of the $f_j(x)$ are not Boolean.)
We consider the problem of finding an optimal testing
strategy for this problem, where 
as usual, $x \sim D_p$, for some probability  vector $p$,
and there is a cost vector $c$ specifying the cost
of testing each variable $x_i$.

Note that there may be more than one correct output for
this problem if there are ties.  So, strictly speaking,
this is not a function evaluation problem.  Nevertheless, we
can still exploit our previous techniques.
For each system of linear equations $f_1, \ldots, f_m$ over $x_1, \ldots, x_n$,
and each $x \in \{0,1\}^n$, 
let $f(x)$ denote the set of permutations 
$\{f_{j_1}, f_{j_2}, \ldots, f_{j_m}\}$
of $f_1, \ldots, f_m$ such that
$f_{j_1}(x) \leq f_{j_2}(x) \leq \ldots \leq f_{j_m}(x)$.
The goal of sorting the $f_j$ 
is to output {\em some} permutation that we know definitively to
be in $f(x)$.
Note that in particular, if e.g., $f_i(x) < f_j(x)$, it may be enough
for us to determine that $f_i(x) \leq f_j(x)$.

\begin{theorem}
\label{thm:fnRanking}
There is an algorithm that solves the Linear Function Ranking
problem that runs
in time polynomial in $m$, $n$, and $D_{max}$, and achieves an approximation
factor that is within $O(\log (mD_{max}))$ of optimal,
where $D_{max}$ is the maximum value of  
$\sum_{i=1}^{n} |a_{j_i}|$ over all the functions $f_j$.
\end{theorem}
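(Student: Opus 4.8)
The plan is to reduce the Linear Function Ranking problem to the simultaneous evaluation of a family of linear threshold formulas and then invoke Theorem~\ref{thm:simthresh}. First I would observe that determining the sorted order of $f_1(x), \ldots, f_m(x)$ amounts to resolving all pairwise comparisons. For each ordered pair $(i,j)$ with $i \neq j$ I would introduce the linear threshold formula $h_{ij}(x) = [\sum_{k=1}^n (a_{j_k} - a_{i_k}) x_k \geq 0]$, which equals $1$ exactly when $f_j(x) \geq f_i(x)$. Knowing the values of $h_{ij}$ and $h_{ji}$ together pins down the three-way relation between $f_i(x)$ and $f_j(x)$ (strictly less, equal, or strictly greater), since both thresholds hold simultaneously iff $f_i(x) = f_j(x)$. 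Thus a partial assignment $b$ determines comparison $(i,j)$ iff $b$ is a $0$- or $1$-certificate for both $h_{ij}$ and $h_{ji}$.

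The crux is to argue that the certificate structures of the two problems coincide, so that they have identical feasible strategies and hence identical optimal expected cost. I would prove both directions. If $b$ determines the sorted order, then every extension of $b$ agrees on each individual pairwise relation, so $b$ is a certificate for each $h_{ij}$. Conversely, if $b$ is a certificate for every $h_{ij}$, the determined relation is a total preorder, and any consistent permutation is valid for all extensions of $b$; here transitivity is what guarantees that resolving the directly tested comparisons forces all comparisons to agree, so no further information is needed. Because the set of partial assignments that constitute certificates is the same for both problems, a decision tree is a valid ranking strategy iff it is a valid simultaneous-evaluation strategy. Hence the optimal costs match and any approximation guarantee for the latter transfers verbatim to the former.

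With the equivalence in hand, the remaining steps are routine bookkeeping. The family consists of $m' = m(m-1)$ threshold formulas, and writing $D_i = \sum_{k} |a_{i_k}|$, each formula has coefficient-magnitude sum $\sum_k |a_{j_k} - a_{i_k}| \leq D_i + D_j \leq 2D_{max}$, so the associated $D'_{avg}$ is at most $2D_{max}$. Applying the first algorithm of Theorem~\ref{thm:simthresh} yields an approximation factor of $O(\log(m' D'_{avg})) = O(\log(m^2 D_{max})) = O(\log(m D_{max}))$, since $\log(m^2 D_{max}) = 2\log m + \log D_{max} = O(\log(m D_{max}))$. The algorithm runs Adaptive Greedy on the conjunctive combination (via Lemma~\ref{combinegoals}) of the $m'$ threshold utility functions built as in Section~\ref{sec:thresholdQ}; each such goal value is $O(D_{max}^2)$ and each utility function is evaluable in time $O(n)$, so the whole procedure is polynomial in $m$, $n$, and $D_{max}$, as claimed.

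I expect the main obstacle to be the equivalence-of-certificates argument, and in particular making the transitivity step precise so that resolving the pairwise comparisons encoded by the $h_{ij}$ is exactly what is needed to output a correct permutation in the presence of ties, neither more nor less. Once that equivalence is established, the coefficient bound and the invocation of Theorem~\ref{thm:simthresh} are straightforward.
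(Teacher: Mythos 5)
There is a genuine gap in your reduction: the claimed equivalence between the certificate structure of the ranking problem and that of simultaneously evaluating all the threshold formulas $h_{ij}$ is false, and the forward direction of your argument fails. A partial assignment $b$ can determine a valid ranking without being a certificate for each $h_{ij}$: to place $f_i$ before $f_j$ it suffices to know $f_i(x) \leq f_j(x)$ for all extensions of $b$, whereas certifying the pair $h_{ij}, h_{ji}$ forces you to also distinguish strict inequality from equality, i.e.\ to certify $h_{ji}=[f_i \geq f_j]$ as $0$ when the inequality is strict. Concretely, take $f_1 = x_1 + x_2 + \cdots + x_n$ and $f_2 = x_1$. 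Then $f_1 \geq f_2$ identically, so the empty assignment already determines a valid ranking and the optimal ranking cost is $0$; but $h_{12} = [x_2 = \cdots = x_n = 0]$ is a non-constant conjunction whose evaluation can require $\Omega(n)$ expected tests. Your algorithm is therefore an $O(\log(mD_{max}))$-approximation to the optimum of a strictly harder problem, and that guarantee does not transfer to the ranking optimum --- the gap between the two optima is unbounded.

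The paper avoids exactly this trap. For each pair it builds two utility functions, $g^{(ij)}_{<}$ (goal reached when $f_i \leq f_j$ is certified) and $g^{(ij)}_{>}$ (symmetric), and combines them with the \emph{disjunctive} construction of Lemma~\ref{combinegoals}, so the per-pair goal is reached as soon as \emph{one} of the two non-strict inequalities is certified; only then are the per-pair utilities combined conjunctively across pairs. This matches the information actually needed for ranking, at the price of having to argue separately that the resulting (possibly non-antisymmetric) collection of known inequalities can always be turned into a valid permutation --- which the paper does with the directed-cycle/tie-collapsing argument. Your bookkeeping in the last paragraph (coefficient sums at most $2D_{max}$, $Q = O(m^2 D_{max}^2)$, hence $O(\log(mD_{max}))$) is fine and essentially identical to the paper's, but it needs to sit on top of the disjunctive per-pair construction rather than on full pairwise evaluation.
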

\begin{proof}
For each pair of linear equations $f_i$ and $f_j$ in the system, 
where $i < j$,
let $f_{ij}$ denote the linear function $f_i - f_j$.
We construct a utility function $g^{(ij)}$ with goal value $Q^{(ij)}$.
Intuitively, the
goal value of $g^{(ij)}$ is reached when there is enough
information to determine that $f_{ij}(x) \geq 0$, 
or when there is enough information to determine that
$f_{ij}(x) \leq 0$.

The construction of $g^{(ij)}$ is very similar to the construction
of the utility function in our first threshold evaluation algorithm.
For each $i,j$ pair, 
let $min_{ij}(b)$ be the minimum value of $f_{ij}(b')$
on any assignment $b' \in \{0,1\}^n$ such that $b' \sim b$,
and let $max_{ij}(b)$ be the maximum value.
Let $R_{max(ij)} = max_{ij}(*, \ldots, *)$
and
let $R_{min(ij)} = min_{ij}(*, \ldots, *)$.

Let
$g^{(ij)}_{<}:\{0,1,*\}^n \rightarrow \mathbb{Z}_{\geq 0}$, 
be defined as follows.
If $R_{max(ij)} \leq 0$, then
$g^{(ij)}_{<}(b) = 0$ for all $b \in \{0,1,*\}^n$
and $Q^{(ij)}_{<} = 0$.
Otherwise,
for $b \in \{0,1,*\}^n$,
let $g^{(ij)}_{<}(b)$ $= min\{R_{max(ij)},R_{max(ij)} - max_{ij}(b)\}$
and $Q^{(ij)}_{<} = R_{max(ij)}$.
It follows that for $b \in \{0,1,*\}^n$,
$f_i(b') \leq f_j(b')$ for all extensions $b' \sim b$
iff $g^{(ij)}_{<}(b) = Q^{(ij)}_{<}$.

We define
$g^{(ij)}_{>}$ 
and $Q^{(ij)}_{<}$ symmetrically,
so that $f_i(b') \geq f_j(b')$ for all extensions $b' \sim b$
iff $g^{(ij)}_{>}(b) = Q^{(ij)}_{>}$.

We apply the disjunctive construction of Lemma~\ref{combinegoals}
to combine $g^{(ij)}_{>}$ 
and
$g^{(ij)}_{<}$ and their associated goal values.  Let the resulting
new utility function be
$g^{(ij)}$ and let its goal 
value be $Q^{(ij)}$.
As in the analysis of the algorithm in Section~\ref{sec:thresholdQ},
we can show that $Q^{(ij)}$ is $O(D^2)$,
where $D$ is the sum of the magnitudes of the coefficients 
in $f_{ij}$.

Using the AND construction of Lemma~\ref{combinegoals}
to combine the $g^{(ij)}$
we get our final utility function $g = \sum_{i < j} g^{(ij)}$
with goal value $Q = \sum_{i < j} Q^{(ij)}$.

We now show that achieving the goal utility $Q$ is
equivalent to having enough information to do the ranking.
Until the goal value is reached, there is still a pair
$i,j$ such that it remains possible that
$f_i(x) > f_j(x)$ (under one setting of the untested
variables), and it remains possible that 
$f_j(x) < f_i(x)$ (under another setting).  In this situation, we do not have
enough information to output a ranking we know to be valid.

Once $g(b) = Q$, the situation changes.
For each $i,j$ such that $f_i(x) < f_j(x)$, 
we know that $f_i(x) \leq f_j(x)$. Similarly, if 
$f_i(x) > f_j(x)$, then at goal utility $Q$, we know that $f_i(x) \geq f_j(x)$.
If $f_i(x) = f_j(x)$ at goal utility $Q$, 
we may only know that $f_i(x) \geq f_j(x)$
or that $f_i(x) \leq f_j(x)$. 
We build a valid ranking from this knowledge as follows.
If there exists an $i$ such that we know that $f_i(x) \leq f_j(x)$
for all $j \neq i$, then we place $f_i(x)$ first in our ranking,
and recursively rank the other elements.
Otherwise, we can easily find a ``directed cycle,'' i.e. a sequence
$i_1, \ldots, i_m$, $m \geq 2$, such that we know that
$f_{i_1}(x) \leq f_{i_2}(x) \leq \ldots \leq f_{i_m}(x)$ and
$f_{i_m}(x) \leq f_{i_1}(x)$. It follows that
$f_{i_1}(x) = \ldots = f_{i_m}(x)$.
In this case, we can delete $f_{i_2}, \ldots, f_{i_m}$, 
recursively rank $f_{i_1}$ and the remaining $f_i$, and then insert
$f_{i_2}, \ldots, f_m$ into the ranking next to $f_{i_1}$.

Applying Adaptive Greedy to solve the SSSC problem for $g$, 
the theorem follows from the
$(\ln Q + 1)$ approximation bound for Adaptive Greedy,
and the fact that $Q = O(D_{max}^2m^2)$.
\end{proof}

\section{Acknowledgments}
Lisa Hellerstein was partially supported by NSF Grants 1217968 
and 0917153.
Devorah Kletenik was partially supported by 
NSF Grant 0917153 and by US Department of
Education GAANN Grant P200A090157.
Amol Deshpande was partially supported by NSF Grants 
0916736 and 1218367.
We would like to thank Tongu\c{c} {\"U}nl{\"u}yurt
and Sarah Allen for their helpful feedback on earlier versions this paper,
and Sarah Allen for preparing a notation summary.
We thank an anonymous referee for suggesting a way
to simplify the original analysis of ADG. Lisa Hellerstein would like to thank Endre Boros, Kazuhisa Makino,
and Vladimir Gurvich for a stimulating discussion at RUTCOR.

\bibliographystyle{custom}
\bibliography{soda,throughput}

\appendix
\section{Table of notation}
\label{append:tablenotation}
\begin{tabular} {l  l}
$x_i$ & the $i$th variable \\
$p_i$ & probability that variable $x_i$ is 1 \\
$c_i$ & cost of testing $x_i$ \\
$p$ & the probability product vector $(p_1, p_2, \ldots, p_n)$ \\
$c$ & the cost vector $(c_1, c_2, \ldots, c_n)$ \\
$b$ & a partial assignment, an element of $\{0,1,*\}^n$ \\
$dom(b)$ & $\{b_i | b_i \neq \ast\}$, the set of variables of $b$ that have already been tested \\
$a \sim b$ & $a$ extends $b$ (is identical to $b$ for all variables $i$ such that $b_i \neq *$)\\
$D_p$ & product distribution, defined by $p$ \\
$x \sim D_p$ & a random $x$ drawn from distribution $D_p$ \\
$Q$ &  goal utility\\
$P$  & maximum utility that testing a single variable $x_i$ can contribute \\
$g$ & utility function defined on partial assignments with a value in $\{0,\ldots, Q\}$ \\
$N$ & the set $\{1,\ldots,n\}$ \\
$S$ & a subset of $N$ \\
%$g_S(j)$ & the increase in utility produced by adding $j$ to $S$ \\
% $b_i^S$ & the partial assignment for which all items of $i \notin S$ are $*$ \\
$g(S,b)$ &utility of testing only the items in $S$, with outcomes specified by $b$ \\
$g_{S,b}(j)$ & $g(S \cup \{j\}, b) - g(S, b)$ \\
$b_{x_i \gets l}$ & $b$ extended by testing variable $i$ with outcome $l$ \\
$k$ & number of clauses in a CNF \\
$d$ & number of terms in a DNF \\
$m$ & the number of linear threshold formulas in the simultaneous evaluation problem \\
$min(b)$ & the minimum possible value of the linear threshold function for any extension of $b$ \\
$max(b)$ & symmetric to $min(b)$, but maximum \\
$R_{min}$ & $min(*, \ldots, *)$ \\
$R_{max}$ & $max(*, \ldots, *)$ \\
$W$ & the set of partial assignments that contain exactly one $*$\\
$w^{(0)}, w^{(1)}$ & for $w \in W$, the extensions obtained from $w$ by setting the $\ast$ to 0 and 1, respectively\\
$j(w)$ & for $w \in W$, the $j$ for which $w_j = *$\\
$a^j$ & the partial assignment produced from $a$ by setting the $j$th bit to $*$ for assignment $a$\\
$a'$ & the assignment produced from $a$ by complementing the $j$th bit \\
$g_a(S)$ & $g(S,a)$\\
$y_{S,a}$ & the variable in LP2 for SSC associated with subset $S$ and assignment $a$ \\%*** CHECK UP ** \\
$C(a)$ & the sequence of items tested by ADG on assignment $a$, in order of testing \\
$Y_{S,a}$ & the value of ADG variable $y_{S}$ after running ADG on input $a$  \\%*CHECK UP **\\
$h'_w(y)$ & the left hand side of the constraint in LP2 for $w$ (a function of the $y_{S,a}$ variables) \\
$Y^t$ & assignment to the $y_{S,a}$ variables s.t. $y_{S,a}$ is the
value of ADG variable $y_S$ at the end of iteration $t$\\
& of its while loop, when ADG is run on input $a$ \\
$T^x$ & the number of iterations of the Adaptive Greedy (AG) while loop on input $x$ \\
$b^t_x$ & the value of $b$ on input $x$ after the $t$th iteration of the loop of AG on $x$ \\
$Y^x$ & the assignment to the  LP2 variables used in the analysis of the new bound for AG \\
$q^x(y)$ & $\sum_{S \subset N} g_{S,x}(N-S) y_S^x$ \\
$F^t$ &  variable of ADG, the set containing the first $t$ variables it tests \\
$g(j)$ & equals $\max_{l \in \{0,1\}} g_r(j,l)$ where $r = (*, \ldots, *)$, in analysis of Adaptive Greedy

%
%$h_{j,x}(y)$ & $\sum_{S \subseteq N} g_{S,x}(j)y_S$ \\
%$h'_{j,x}(y)$ & $\sum_{S \subseteq N}(E[g_{S,x}(j)])y_S$ \\
%$y^x$ & the assignment to the $y_S$ variables after running ADG\\
%$x^j$ & the assignment of $x$ setting $x_j$ to * \\
%
%
%$F(x)$ & the set of subsets $F^t$ produced over the course of running ADG\\
%$y^t$ & the value of assignment $y$ at the end of the $t$th iteration \\

\end{tabular}

\end{document}